\numberwithin{equation}{section}
\numberwithin{figure}{section}
\theoremstyle{plain}
\newtheorem{thm}{\protect\theoremname}[section]
\theoremstyle{definition}
\newtheorem{example}[thm]{\protect\examplename}
\theoremstyle{plain}
\newtheorem{lem}[thm]{\protect\lemmaname}
\theoremstyle{remark}
\newtheorem{rem}[thm]{\protect\remarkname}
\DeclareMathAlphabet{\mathcal}{OMS}{cmsy}{m}{n}
\providecommand{\examplename}{Example}
\providecommand{\lemmaname}{Lemma}
\providecommand{\remarkname}{Remark}
\providecommand{\theoremname}{Theorem}
\begin{document}
\global\long\def\Sgm{\boldsymbol{\Sigma}}%

\global\long\def\W{\boldsymbol{W}}%

\global\long\def\H{\boldsymbol{H}}%

\global\long\def\P{\mathbb{P}}%

\global\long\def\Q{\mathbb{Q}}%

\title{Stochastic formulation of incompressible fluid flows \\ in wall-bounded
regions }
\author{By Zhongmin Qian\thanks{Mathematical Institute, University of Oxford, Oxford OX2 6GG, UK,
and Oxford Suzhou Centre for Advanced Research, Suzhou, P. R. China.
Email: \protect\href{mailto:zhongmin.qian@maths.ox.ac.uk}{zhongmin.qian@maths.ox.ac.uk}}}
\maketitle
\begin{abstract}
In this paper we establish a mathematical framework which may be used
to design Monte-Carlo simulations for a class of time irreversible
dynamic systems, such as incompressible fluid flows, including turbulent
flows in wall-bounded regions, and some other (non-linear) dynamic
systems. Path integral representations for solutions of forward parabolic
equations are obtained, and, in combining with the vorticity transport
equations, probabilistic formulations for solutions of the Navier-Stokes
equations are therefore derived in terms of (forward) McKean-Vlasov
stochastic differential equations (SDEs), which provides us with the
mathematical framework for Monte-Carlo simulations of wall-bounded
turbulent flows. 

\medskip

\emph{Keywords}: Diffusion process, Feynman-Kac formulas, McKean-Vlasov
SDEs, Monte Carlo method, Navier-Stokes equation, parabolic equations,
pinned diffusion measures, turbulent flows, velocity field, vorticity,
vorticity transport equation, wall-bounded fluid flows

\medskip

\emph{MSC classifications}: 60H30, 35Q30, 35Q35, 76D03, 76D05, 76D17
\end{abstract}

\section{Introduction}

It was pointed out that, in a report \citep{vonNeumann1949} by von
Neumann about 70 years ago, turbulence may be understood with the
assistance of powerful electronic computer simulations. With the increasing
computational power it is now possible to obtain information about
turbulent flows via direct numerical simulations (DNS), large-eddy
simulation (LES) and other simulation technologies (cf. \citep{Deardorff 1970,Orszag-Patterson1972,Geurts2003,SenguptaBhaumik2019}
for example). One of the common difficulties for implementing turbulent
simulations (such as the finite element method, the finite difference
method and the finite volume method) lies in the non-linear and non-local
nature of the governing fluid dynamic equations, the Navier-Stokes
equations. According to Kolmogorov's small scale theory of isotropic
turbulence, in order to obtain meaningful simulations of turbulent
flows, the space separation has to be smaller than $\textrm{Re}^{-3/4}L$,
where $\textrm{Re}$ is the Reynolds number and $L$ the typical length.
This requirement for a small mesh in numerical schemes leads to huge
computation hours for simulating turbulence, see \citep[Chapter 7]{Davidson2015}
for a very useful analysis about this aspect. It is thus desirable
to develop Monte-Carlo schemes for simulations of turbulent flows
which may avoid the use of small grids in order to numerically solve
non-linear fluid dynamic equations. 

Monte-Carlo simulations are based on the law of large numbers, which
says the expectation of a random variable can be approximated by the
average of independent samplings. Although convergence rates of Monte-Carlo
methods are in general slow, but the advantage of Monte-Carlo schemes
in most cases lies in their capability of dealing with multivariate
dynamic variables. To implement Monte-Carlo schemes for numerically
calculating solutions of some linear and non-linear partial differential
equations, explicit representations of solutions in terms of some
distributions, or in terms of stochastic differential equations, have
to be established. An archetypal example is the numerical scheme for
computing solutions of Schr\"odinger type partial differential equations,
where the Feynman-Kac formula provides us with a useful representation
to the solutions of the Schr\"dinger equations. 

In this paper we establish a mathematical framework for designing
Monte-Carlo schemes for numerically computing solutions of incompressible
fluid flows in wall-bounded regions. The basic idea in our approach
is to derive an equivalent formulation of the fluid dynamic equations
in terms of McKean-Vlasov type (ordinary) stochastic differential
equations which can be solved numerically. Let us describe the basic
idea and the main technical issues which will be solved in the present
paper.

The basic dynamic variable for a fluid flow in a region $D$ is its
velocity $u(x,t)$ for $x\in D$ and $t\geq0$, a time dependent vector
field. The main scientific question in turbulence is to give a good
description of $u(x,t)$. Mathematically determining the velocity
$u(x,t)$ is equivalent to describing its integral curve $X(\xi,t)$
which are solutions to the following differential equation
\begin{equation}
\frac{\textrm{d}}{\textrm{d}t}X(\xi,t)=u(X(\xi,t),t),\quad X(\xi,0)=\xi.\label{X-1}
\end{equation}
For a viscous fluid flow with viscosity constant $\nu>0$, according
to Taylor \citep{Taylor1921}, one may consider the Brownian trajectories
$X(\xi,t)$ instead which are solutions to the random dynamic system
\begin{equation}
\textrm{d}X(\xi,t)=u(X(\xi,t),t)\textrm{d}t+\sqrt{2\nu}\textrm{d}B(t),\quad X(\xi,0)=\xi,\label{X-2}
\end{equation}
where $B(t)$ is a Brownian motion on some probability space. The
main idea is to rewrite the velocity $u(x,t)$ in terms of the distribution
of the Brownian particles $X(\xi,t)$, substituting it into (\ref{X-2}),
so that (\ref{X-2}) becomes a stochastic differential equation involving
the distribution of its solution. We may call this technique as solving
\emph{the closure problem of Taylor's diffusion}. The closure problem
for incompressible fluid flows freely moving in the whole space (without
boundary) has been solved in a recent paper \citep{QianSuliZhang2022}
(see also for a similar probabilistic representation in \citep{ConstantinIyer2008})
by using an idea from the vortex methods \citep{Chorin 1973,Chorin1980,Cottet and Koumoutsakos 2000,Majda and Bertozzi 2002}.
The most important case of fluid flows in wall-bounded regions will
be solved in the present paper. Let us recall some basic ideas in
the study of the vortex dynamics and the vorticity-velocity formulation,
for details one may refer to \citep{ChorinMarsden2000,Constantin2001a,Cottet and Koumoutsakos 2000,Long 1988,Majda and Bertozzi 2002}.

Consider an incompressible fluid flow with viscosity constant $\nu>0$,
whose velocity $u=(u^{1},u^{2},u^{3})$ and pressure $P$ solve the
Navier-Stokes equations
\begin{equation}
\frac{\partial}{\partial t}u^{i}+(u\cdot\nabla)u^{i}-\nu\Delta u^{i}+\frac{\partial}{\partial x^{i}}P=0\label{i-ns1}
\end{equation}
and
\begin{equation}
\sum_{j=1}^{3}\frac{\partial}{\partial x^{j}}u^{j}=0\label{d-ns1}
\end{equation}
where $i=1,2,3$. The sum in (\ref{d-ns1}) is the divergence of $u$,
$\nabla\cdot u$ at any instance, so the second equation means that
the velocity $u(x,t)$ is divergence free (which can be formulated
in the distribution sense on $\mathbb{R}^{3}$). $u_{0}(x)=u(x,0)$
is called the initial data. The boundary condition at infinity has
to be supplied such as the velocity is bounded at infinity, but let
us ignore this kind of technical issues. It is known that the vorticity
$\omega=\nabla\wedge u$, whose components $\omega^{i}=\varepsilon^{ijk}\frac{\partial}{\partial x^{j}}u^{k}$,
plays a dominate r\^ole in the description of turbulent flows. The
vorticity $\omega$ evolves according to the vorticity transport equations
\begin{equation}
\frac{\partial}{\partial t}\omega^{i}+(u\cdot\nabla)\omega^{i}-\nu\Delta\omega^{i}-\sum_{j=1}^{3}S_{j}^{i}\omega^{j}=0\label{i-v-t-01}
\end{equation}
for $i=1,2,3$, which are obtained by differentiating the Navier-Stokes
equations. Here
\begin{equation}
S_{i}^{j}=\frac{1}{2}\left(\frac{\partial u^{j}}{\partial x^{i}}+\frac{\partial u^{i}}{\partial x^{j}}\right)\label{ij-strain}
\end{equation}
is the symmetric tensor of rate-of-strain. Note that the dynamic variables
$u$, $\omega$ and $S$ depend on $(x,t)$ and are time irreversible.
Our goal is to express the velocity $u(x,t)$ in terms of the distribution
of Taylor's diffusion $X(\xi,t)$ defined by (\ref{X-2}). To achieve
this goal, we observe that $X(\xi,t)$ is a diffusion process with
its infinitesimal generator $L_{u}=\nu\Delta+u\cdot\nabla$, whose
transition probability density $p_{u}(\tau,\xi,t,x)$ is the fundamental
solution to $L_{u}+\frac{\partial}{\partial t}$, cf. \citep{Stroock and Varadhan 1979},
\citep{Ikeda and Watanabe 1989} for example, According to a general
fact from the parabolic theory (cf. \citep{Friedman 1964}), $p_{u}(\tau,\xi,t,x)$
coincides with the fundamental solution to the forward heat operator
$L_{u}^{\star}-\frac{\partial}{\partial t}$. Since $u$ is divergence-free,
so that $L_{u}^{\star}=L_{-u}$, hence the vorticity transport equations
\eqref{i-v-t-01} may be written as Schr\"odinger type equations
\begin{equation}
\left(L_{-u}-\frac{\partial}{\partial t}\right)\omega^{i}+\sum_{j=1}^{3}S_{j}^{i}\omega^{j}=0,\label{i-vt-02}
\end{equation}
and therefore it is possible to express $\omega^{i}$ in terms of
the distribution of the Taylor diffusion, the data $S_{j}^{i}$ and
the initial vorticity $\omega_{0}$. For fluid flows for which the
``potential'' term $\sum_{j=1}^{3}S_{j}^{i}\omega^{j}$ vanish identically
(for $i=1,2,3$); in fact it is the case for 2D fluid flows; the vorticity
transport equations become $\left(L_{-u}-\frac{\partial}{\partial t}\right)\omega^{i}=0$,
thus $\omega^{i}$ can be written as an integral of the initial vorticity
$\omega_{0}=\nabla\wedge u_{0}$ against the fundamental solution
$p_{u}(\tau,\xi,t,x)$. Therefore the closure problem for this case
can be solved by using the Biot-Savart law, cf. \citep{Long 1988,Majda and Bertozzi 2002}.
For turbulent flows, it is typical that the non-linear vorticity stretching
term $\sum_{j=1}^{3}S_{j}^{i}\omega^{j}$ do not vanish. For this
case one may apply the Feynman-Kac formula to \eqref{i-vt-02} in
terms of the time reversed diffusion with generator $L_{-u}$ (but
not $L_{u}$). To resolve the closure problem to \eqref{X-2}, we
need to rewrite the Feynman-Kac formula in terms of the law of the
Taylor diffusion \eqref{X-2}, which will be achieved by using the
duality between the conditional law of the time reversed $L_{-u}$-diffusion
and the conditional law of $L_{u}$-diffusion, cf. \citep{QianSuliZhang2022}
for details.

In this paper we aim to solve the closure problem for incompressible
fluid flows constrained in a domain $D$ with a boundary $\partial D$,
so that schemes may be devised for numerically calculating the solutions,
including the flows within their boundary layers. For a wall-bounded
flow, the velocity $u(x,t)$ satisfies the Navier-Stokes equations
(\ref{i-ns1}, \ref{d-ns1}) only for $x\in D$ and $t>0$, and has
to satisfy the no-slip condition, i.e. $u(x,t)$ vanishes for $x\in\partial D$.
We extend the definition of $u(x,t)$ to all $x\in\mathbb{R}^{3}$
such that $\nabla\cdot u=0$ on $\mathbb{R}^{3}$ in the distribution
sense. This latter requirement is needed in order to ensure again
the duality for the conditional laws of some diffusions involved.
Hence we can define the Taylor diffusion again by (\ref{X-2}) on
the whole space $\mathbb{R}^{3}$.

To solve the closure problem for the boundary problem of the fluid
dynamic equations, we develop several technical tools, mainly the
duality of conditional laws, and the forward type Feynman-Kac formula
for a general class of diffusion processes. We believe that these
results have independent interest by their own and may be useful for
the study of other problems associated with diffusion-reaction equations
(cf. \citep{Freidlin1985,Freidlin1992}). More precisely we will establish
these results for the laws of diffusion processes with infinitesimal
generator being an elliptic operator of second order
\[
\mathscr{L}=\sum_{i,j=1}^{d}\frac{\partial}{\partial x^{j}}a^{ij}(x,t)\frac{\partial}{\partial x^{i}}+\sum_{i=1}^{d}b^{i}(x,t)\frac{\partial}{\partial x^{i}},
\]
although some results are established for more general elliptic operators,
i.e. for elliptic operators not in divergence form.

This paper is organised as the following. In Section 2, we introduce
a class of time dependent elliptic operators of second order which
set up the basic data used throughout the paper. We recall in this
section a few analytic and probabilistic structures, and recall several
well known relations among forward, backward fundamental solutions
and transition probability density functions, scattered in literature,
stated as Lemma \ref{lem3.1}, Lemma \ref{lem4.1} and Lemma \ref{lem4.3}.
In Section 3, by using the classical Feynman-Kac formula, we establish
a general duality relation between the fundamental solution and the
transition probability density function (Lemma \ref{lem5.2} and Lemma
\ref{thm5.3}), and under the divergence free condition on the drift
vector field $b(x,t)$ establish the duality (Lemma \ref{thm5.3-1}).
In Section 4, we prove the main technical tool, a time reverse theorem
(Theorem \ref{theorem3.3}) for the conditional laws of the $\mathscr{L}$-diffusion
when the vector field $b(x,t)$ is divergence free. Thus far we have
established the basic tools for proving our main results. In Section
5, we prove a forward Feynman-Kac formula for solutions of Schr\"odinger
type parabolic systems on the whole space, which generalises the results
in \citep{QianSuliZhang2022}. Section 6 and Section 7 contain the
main contributions of this paper. In Section 6, we establish a forward
Feynman-Kac formula for solutions of the (non homogeneous) boundary
value problems, Theorem \ref{thm7.3}, and in Section 7 we apply the
theory to solve the closure problem for solutions of the Navier-Stokes
equations satisfying the no-slip boundary condition. 

Numerical experiments are not included in the present paper for the
reason that numerical simulations based on the formulation of the
Navier-Stokes equations in Section 7 must be done case by case, and
some of numerical experiments will be published in separate papers. 

In the past decades, many excellent works addressing some probabilistic
aspects of fluid dynamic equations have been published by various
authors, although not from a view-point of numerically calculating
solutions, cf. \citep{Busnello1999}, \citep{Busnello2005}, \citep{Constantin2001a,Constantin2001b},
\citep{ConstantinIyer2008}, \citep{XZhang2010,XZhang2016} for example
and the literature therein. There are still many papers the author
may be not aware of, to the best knowledge of the present author however,
only the Navier-Stokes equations on $\mathbb{R}^{d}$ or $\mathbb{T}^{d}$
($d=2,3$) were considered in a view of stochastic analysis in particular
via the stochastic flow method, and the boundary value problems are
not treated yet in the existing literature. Also in this paper we
adopt an approach of weak solutions both for PDEs and SDEs, and therefore
stochastic flows do not play a r\^ole in our study. This makes distinct
difference between our approach and the existing methods.

\section{Assumptions and notations}

In this section, we introduce several notions and notations which
will be used throughout the paper.

Let $D\subset\mathbb{R}^{d}$ be an open subset of the Euclidean space
of $d$ dimensions, with a smooth boundary $\partial D$, where $d\geq2$.
$\partial D$ is an embedded sub-manifold of dimension $d-1$, though
not necessary connected. $\overline{D}$ denotes the closure of $D$. 

Let $\nu>0$ be a positive constant representing ``the viscosity
constant''. 

Let $a(x,t)=(a^{ij}(x,t))_{i,j\leq d}$ be a Borel measurable, $d\times d$
symmetric matrix-valued function defined for $x\in\mathbb{R}^{d}$
and $t\in\mathbb{R}$. It is assumed that $a(x,t)$ is uniformly elliptic
in the sense that there is a constant $\lambda\geq1$ such that
\begin{equation}
\lambda^{-1}|\xi|^{2}\leq\sum_{i,j=1}^{d}\xi^{i}\xi^{j}a^{ij}(x,t)\leq\lambda|\xi|^{2}\label{u-c01}
\end{equation}
for every $\xi=(\xi^{1},\cdots,\xi^{d})\in\mathbb{R}^{d}$ and $(x,t)\in\mathbb{R}^{d}\times\mathbb{R}$.
It is also assumed that there is a symmetric matrix-valued function
$\sigma(x,t)=(\sigma_{j}^{i}(x,t))$ such that $\sigma_{j}^{i}(x,t)=\sigma_{i}^{j}(x,t)$
and
\begin{equation}
a^{ij}(x,t)=\sum_{l=1}^{d}\sigma_{l}^{i}(x,t)\sigma_{l}^{j}(x,t)\label{a-sigma01}
\end{equation}
for all $i,j\leq d$.

Let $b(x,t)=(b^{1}(x,t),\ldots,b^{d}(x,t))$ be a Borel measurable,
time-dependent vector field on $\mathbb{R}^{d}$, and $c(x,t)$ be
a Borel measurable scalar function on $\mathbb{R}^{d}\times\mathbb{R}$.
$a(x,t)$, $b(x,t)$ and $c(x,t)$ may be defined originally only
for $(x,t)\in D\times J$ (where $J\subset\mathbb{R}$ is an interval),
but without further specification, these functions are automatically
extended to all $(x,t)\in\mathbb{R}^{d}\times\mathbb{R}$ according
to the following rules: $b(x,t)=0$, $c(x,t)=0$ and $a^{ij}(x,t)=\delta^{ij}$
for $(x,t)\notin D\times J$ unless otherwise specified.

Define a time-dependent differential operator of second order
\begin{equation}
L_{a;b,c}=\nu\sum_{i,j=1}^{d}a^{ij}\frac{\partial^{2}}{\partial x^{j}\partial x^{i}}+\sum_{i=1}^{d}b^{i}\frac{\partial}{\partial x^{i}}+c.\label{L-op}
\end{equation}
The first term involving $a^{ij}(x,t)$ on the right-hand side is
called the diffusion part, $b(x,t)$ is called the drift vector field,
and the scalar multiplication part $c(x,t)$ is called the zero-th
order term. If $c=0$, then $L_{a;b,0}$ is a diffusion operator,
which will be denoted by $L_{a;b}$ for simplicity. If $a^{ij}(x,t)=\delta^{ij}$
for all $(x,t)\in\mathbb{R}^{d}\times\mathbb{R}$, then $L_{a;b,c}$
is denoted by $L_{b,c}$, and similarly, if $a^{ij}(x,t)=\delta^{ij}$
and $c(x,t)=0$ for all $(x,t)\in\mathbb{R}^{d}\times\mathbb{R}$,
then $L_{a;b,c}$ is denoted by $L_{b}$. 

$L_{a;b,c}$ may be written in a form whose diffusion part is written
in divergence form
\begin{equation}
L_{a;b,c}=\nu\sum_{i,j=1}^{d}\frac{\partial}{\partial x^{j}}a^{ij}\frac{\partial}{\partial x^{i}}+\sum_{i=1}^{d}\left(b^{i}-\nu\sum_{j=1}^{d}\frac{\partial a^{ij}}{\partial x^{j}}\right)\frac{\partial}{\partial x^{i}}+c,\label{L-op2}
\end{equation}
as long as $\frac{\partial a^{ij}}{\partial x^{j}}$ exist. According
to integration by parts, the formal adjoint $L_{a;b,c}^{\star}$ of
$L_{a;b,c}$ is again a second order differentiable operator and $L_{a;b,c}^{\star}=L_{a;b_{\star},c_{\star}}$,
where
\begin{equation}
b_{\star}^{i}(x,t)=-b^{i}(x,t)+2\nu\sum_{j=1}^{d}\frac{\partial a^{ij}(x,t)}{\partial x^{j}}\label{re-01}
\end{equation}
and
\begin{equation}
c_{\star}(x,t)=c(x,t)-\nabla\cdot b(x,t)+\nu\sum_{i,j=1}^{d}\frac{\partial^{2}a^{ij}(x,t)}{\partial x^{j}\partial x^{i}}\label{re-02}
\end{equation}
respectively. The formal adjoint of the (forward) heat operator $L_{a;b,c}-\frac{\partial}{\partial t}$
is the (backward) heat operator $L_{a;b_{\star},c_{\star}}+\frac{\partial}{\partial t}$. 
\begin{example}
If $a^{ij}=\delta_{ij}$, then $L_{b,c}=\nu\Delta+b\cdot\nabla+c$
and $L_{b,c}^{\star}=L_{-b,c-\nabla\cdot b}$, which is a very important
relation in the study of random vortex methods, cf. \citep{Chorin 1973,Cottet and Koumoutsakos 2000,Majda and Bertozzi 2002}
for details.
\end{example}

Similarly
\begin{equation}
\mathscr{L}_{a;b,c}=\nu\sum_{i,j=1}^{d}\frac{\partial}{\partial x^{j}}a^{ij}\frac{\partial}{\partial x^{i}}+\sum_{i=1}^{d}b^{i}\frac{\partial}{\partial x^{i}}+c\label{div-oper-01}
\end{equation}
is an elliptic operator of second order, and $\mathscr{L}_{a;b,c}=L_{a;\hat{b},c}$,
where
\begin{equation}
\hat{b}^{i}=b^{i}+\nu\sum_{j=1}^{d}\frac{\partial a^{ij}}{\partial x^{j}}\label{div-03}
\end{equation}
as long as $\frac{\partial a^{ij}}{\partial x^{j}}$ exist. Note that
$\hat{b}$ is independent of $c$.

In general Einstein's convention that repeated indices are summed
up in their ranges is applied, unless said otherwise. The following
convention will also be applied to quantities which rely on $a$,
$b$ and $c$. If a quantity depends on $a,b$ and $c$ then it may
be labelled with a lower subscript $a;b,c$. If $a^{ij}(x,t)=\delta_{ij}$
for all $x$ and $t$, then the part $a;$ will be omitted, so that
\begin{equation}
L_{b,c}=\nu\Delta+\sum_{i=1}^{d}b^{i}\frac{\partial}{\partial x^{i}}+c.\label{si-01}
\end{equation}
If $c=0$, then the part $,c$ will be omitted, hence
\begin{equation}
L_{a;b}=\nu\sum_{i,j=1}^{d}a^{ij}\frac{\partial^{2}}{\partial x^{i}\partial x^{j}}+\sum_{i=1}^{d}b^{i}\frac{\partial}{\partial x^{i}}\label{si-02}
\end{equation}
and therefore
\begin{equation}
L_{b}=\nu\Delta+\sum_{i=1}^{d}b^{i}\frac{\partial}{\partial x^{i}}.\label{si-03}
\end{equation}
This convention will be applied to other quantities such as the fundamental
solutions, transition probability density functions, and so on. In
particular $\mathscr{L}_{a;b}=L_{a;\hat{b}}$.

We finally make some comments on further regularity assumptions on
the data $a,b,c$ and $q$ (to be introduced later on) in additional
to the assumptions we have already made. For simplicity, we assume
that all data $a,b,c$ and $q$ are smooth and bounded, although these
regularity conditions may be too demanding for applications. In fact,
in additional to the assumptions made, the following assumptions are
sufficient for our arguments to be true.

1) The results and arguments about $L_{a;b}$ are valid if $a^{ij}(x,t)$
are uniformly continuous in $(x,t)$, $b^{i}(x,t)$, $c(x,t)$ and
$q(x,t)$ are bounded, Borel measurable and continuous in $t$, although
the boundedness conditions may be weaken. 

2) The results and arguments about $\mathscr{L}_{a;b}$ are valid
if $a^{ij}(x,t)$ are Borel measurable and continuous in $t$, $b^{i}(x,t)$,
$c(x,t)$ and $q(x,t)$ are bounded, Borel measurable and continuous
in $t$, although the boundedness conditions may be weaken.

\subsection{Analytic structures}

In this section we recall several analytic structures associated with
the basic data $a(x,t)$, $b(x,t)$ and $c(x,t)$. 

The fundamental solutions are the basic analytic quantities associated
with the elliptic operators of second order, and the reader may refer
to \citep{Aronson 1968} and \citep{Friedman 1964,Ladyzenskaja Solonnikov and Uralceva 1968}
for the most fundamental results in this aspect. 

$\Gamma_{a;b,c}(x,t;\xi,\tau)$, defined for $x,\xi\in\mathbb{R}^{d}$
and $\tau<t$, is the fundamental solution of the \emph{forward} parabolic
equation
\begin{equation}
\left(L_{a;b,c}-\frac{\partial}{\partial t}\right)f(x,t)=0\label{f-parabolic-01}
\end{equation}
in the sense that for every fixed $\tau\geq0$ and $\xi\in\mathbb{R}^{d}$,
$f(x,t)=\Gamma_{a;b,c}(x,t;\xi,\tau)$ solves \eqref{f-parabolic-01},
and for every bounded continuous function $\varphi$ on $\mathbb{R}^{d}$
\begin{equation}
\lim_{t\downarrow\tau}\int_{\mathbb{R}^{d}}\Gamma_{a;b,c}(x,t;\xi,\tau)\varphi(\xi)\textrm{d}\xi=\varphi(x)\label{w-lim-01}
\end{equation}
for every $x\in\mathbb{R}^{d}$. 

Similarly $\Gamma_{a;b,c}^{\star}(x,t;\xi,\tau)$ defined for $x,\xi\in\mathbb{R}^{d}$
and for $t<\tau$ is a fundamental solution of the backward heat operator
$L_{a;b,c}+\frac{\partial}{\partial t}$, if for any $\xi\in\mathbb{R}^{d}$
and $\tau$, $f(x,t)=\Gamma_{a;b,c}^{\star}(x,t;\xi,\tau)$ satisfies
the backward parabolic equation 
\begin{equation}
\left(L_{a;b,c}+\frac{\partial}{\partial t}\right)f(x,t)=0\label{back-p-01}
\end{equation}
and
\[
\lim_{t\uparrow\tau}\int_{\mathbb{R}^{d}}\Gamma_{a;b,c}^{\star}(x,t;\xi,\tau)\varphi(\xi)\textrm{d}\xi=\varphi(x)
\]
for every continuous function $\varphi$ on $\mathbb{R}^{d}$ and
for every $x\in\mathbb{R}^{d}$.

\begin{lem}
\label{lem3.1}The following relation holds:
\begin{equation}
\Gamma_{a;b,c}(x,t;\xi,\tau)=\Gamma_{a;b_{\star},c_{\star}}^{\star}(\xi,\tau;x,t)\label{dual-01-1}
\end{equation}
for any $\tau<t$ and $x,\xi\in\mathbb{R}^{d}$. 
\end{lem}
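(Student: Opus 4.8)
The plan is to establish \eqref{dual-01-1} by exhibiting both sides as solving the same parabolic problem and invoking uniqueness of fundamental solutions, or equivalently by a duality (Green's) identity. Fix $\tau<t$ and $x,\xi\in\mathbb{R}^d$. Consider the function
\[
g(y,s;\xi,\tau):=\Gamma^{\star}_{a;b_\star,c_\star}(\xi,\tau;y,s),
\]
viewed as a function of $(y,s)$ with $(\xi,\tau)$ fixed and $s>\tau$. By the very definition of $\Gamma^\star$ as the fundamental solution of the backward operator $L_{a;b_\star,c_\star}+\partial_s$ in its \emph{first} pair of arguments, the function $h(z,r):=\Gamma^\star_{a;b_\star,c_\star}(z,r;y,s)$ satisfies $(L_{a;b_\star,c_\star}+\partial_r)h=0$ for $r<s$, with the delta initial condition as $r\uparrow s$. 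I want to turn this into a statement about $g$ as a function of $(y,s)$. This is exactly the content I expect to need from the general parabolic theory (the references \citep{Friedman 1964,Ladyzenskaja Solonnikov and Uralceva 1968} cited just above): the fundamental solution of $L_{a;b_\star,c_\star}+\partial_s$, regarded as a function of its \emph{second} pair of variables, solves the formal adjoint equation, namely $(L_{a;b_\star,c_\star}^{\star}-\partial_s)g=0$ for $s>\tau$. Since the formal adjoint of $L_{a;b_\star,c_\star}$ is $L_{a;b,c}$ by the involutivity of the maps \eqref{re-01}–\eqref{re-02} (one checks $b_{\star\star}=b$ and $c_{\star\star}=c$ directly from those formulas, using that $a$ is symmetric), this says $(L_{a;b,c}-\partial_s)g=0$ for $s>\tau$.

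Next I would verify the initial condition: as $s\downarrow\tau$, $g(\cdot,s;\xi,\tau)=\Gamma^\star_{a;b_\star,c_\star}(\xi,\tau;\cdot,s)$ should concentrate at $\xi$ in the sense of \eqref{w-lim-01}. This again follows from the defining property of $\Gamma^\star$ together with the symmetry of the delta concentration (the adjoint kernel concentrates on the diagonal from either side). Granting these two facts, $g(y,s;\xi,\tau)$ satisfies precisely the forward problem \eqref{f-parabolic-01}–\eqref{w-lim-01} that characterizes $\Gamma_{a;b,c}(y,s;\xi,\tau)$. By uniqueness of the fundamental solution of the forward parabolic equation (standard under our standing smoothness and uniform ellipticity hypotheses), we conclude $g(y,s;\xi,\tau)=\Gamma_{a;b,c}(y,s;\xi,\tau)$, i.e. $\Gamma_{a;b,c}(x,t;\xi,\tau)=\Gamma^\star_{a;b_\star,c_\star}(\xi,\tau;x,t)$, which is \eqref{dual-01-1}.

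An alternative, more self-contained route avoids quoting the "second-variable" property and instead proves it via a Green identity: for a forward solution $f$ of $(L_{a;b,c}-\partial_t)f=0$ on $(\tau,t)$ and a backward solution $\varphi$ of $(L_{a;b_\star,c_\star}+\partial_s)\varphi=0$ on the same strip, the quantity $\int_{\mathbb{R}^d} f(y,s)\varphi(y,s)\,\mathrm{d}y$ is $s$-independent, because $\partial_s\!\int f\varphi = \int (\varphi L_{a;b,c}f - f L_{a;b_\star,c_\star}\varphi) = 0$ by integration by parts (this is exactly the computation recorded in \eqref{re-01}–\eqref{re-02}), provided boundary terms at infinity vanish — which they do under the Gaussian bounds on fundamental solutions. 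Feeding $f(\cdot,\cdot)=\Gamma_{a;b,c}(\cdot,\cdot;\xi,\tau)$ and $\varphi(\cdot,\cdot)=\Gamma^\star_{a;b_\star,c_\star}(z,r;\cdot,\cdot)$ into this invariance and letting $s\downarrow\tau$ on one end (picking out $\varphi(\xi,\tau)=\Gamma^\star_{a;b_\star,c_\star}(z,r;\xi,\tau)$) and $s\uparrow r$ on the other (picking out $f(z,r)=\Gamma_{a;b,c}(z,r;\xi,\tau)$) yields the identity after relabeling.

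The main obstacle I anticipate is not any single deep idea but the justification of the limit exchanges and the vanishing of boundary terms at spatial infinity: one must know a priori Gaussian upper bounds (Aronson-type, available from \citep{Aronson 1968} under uniform ellipticity) for both $\Gamma_{a;b,c}$ and $\Gamma^\star_{a;b_\star,c_\star}$ and their first spatial derivatives in order to differentiate under the integral sign and to discard the flux terms in the integration by parts, and one must check that the two one-sided delta limits can be taken inside the integral. Under the blanket smoothness-and-boundedness assumptions adopted in this section these are routine, so in the write-up I would state the Green identity, note the decay estimates that license it, and then extract \eqref{dual-01-1} by the two limiting procedures; the involutivity $(b_\star,c_\star)_\star=(b,c)$ is a one-line check from \eqref{re-01}–\eqref{re-02} that I would include explicitly since it is what makes the adjoint pairing close up.
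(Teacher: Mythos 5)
Your argument is correct: the paper gives no proof of Lemma \ref{lem3.1} itself, deferring to Friedman's Theorem 15, and your Green-identity route (the $s$-invariance of $\int f\varphi\,\mathrm{d}y$ for a forward solution paired with a backward adjoint solution, followed by the two one-sided delta limits) is precisely the classical argument given there; the involutivity check $(b_\star)_\star=b$, $(c_\star)_\star=c$ from \eqref{re-01}--\eqref{re-02} is also right and is indeed what closes the pairing. Nothing is missing beyond the routine justifications you already flagged (Aronson-type bounds to kill the boundary terms and to pass limits under the integral).
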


For a proof of this basic fact, see \citep[Theorem 15, page 28]{Friedman 1964}.

\subsection{Probabilistic structures}

In this section we recall several probabilistic structures associated
with the elliptic operator of second order $L_{a;b}$. If $a^{ij}(x,t)$
are uniformly continuous and $b^{i}(x,t)$ are bounded and Borel measurable,
then according to \citep{Stroock-Varadhan1969} there is a unique
family of probability measures $\mathbb{P}_{a;b}^{\xi,\tau}$ (where
$\tau\geq0$ and $\xi\in\mathbb{R}^{d}$) on the path space $\varOmega=C([0,\infty),\mathbb{R}^{d})$
of all continuous paths in $\mathbb{R}^{d}$, equipped with its Borel
$\sigma$-algebra, such that

1) the diffusion starts from $\xi$ at time $\tau$ in the following
sense that
\begin{equation}
\mathbb{P}_{a;b}^{\xi,\tau}\left[\psi\in\varOmega:\psi(t)=\xi\textrm{ for all }0\leq t\leq\tau\right]=1.\label{mart-01}
\end{equation}

2) (local martingale property) for every $f\in C^{2,1}(\mathbb{R}^{d}\times[\tau,\infty))$,
\[
M_{t}^{[f]}=f(\psi(t),t)-f(\psi(\tau),\tau)-\int_{\tau}^{t}\left(L_{a;b}+\frac{\partial}{\partial s}\right)f(\psi(s),s)\textrm{d}s
\]
for $t\geq\tau\geq0$ and $M_{t}^{[f]}=0$ for $0\leq t\leq\tau$,
is a local martingale under the probability measure $\mathbb{P}_{a;b}^{\xi,\tau}$. 

The family $\mathbb{P}_{a;b}^{\xi,\tau}$ (where $\xi\in\mathbb{R}^{d},\tau\geq0$)
of probability measures is simply called the $L_{a;b}$-diffusion. 

If $a^{ij}(x,t)=\sigma_{k}^{i}(x,t)\sigma_{j}^{k}(x,t)$ and $\sigma_{j}^{i}(x,t)$
and $b^{i}(x,t)$ are Lipschitz continuous, then the $L_{a;b}$-diffusion
may be constructed by solving It\^o's stochastic differential equations:
\begin{equation}
\textrm{d}X^{i}=b^{i}(X,t)\textrm{d}t+\sqrt{2\nu}\sum_{k=1}^{d}\sigma_{k}^{i}(X,t)\textrm{d}B^{k},\;X_{s}=\xi\textrm{ for }s\leq\tau\label{eq:sde-m1}
\end{equation}
for $i=1,\ldots,d$, where $B=(B^{1},\cdots,B^{d})$ is the standard
Brownian motion in $\mathbb{R}^{d}$. The distribution of the strong
solution $X$ is the probability measure $\mathbb{P}_{a;b}^{\xi,\tau}$.

Let $P_{a;b}(\tau,\xi,t,\textrm{d}x)=\mathbb{P}_{a;b}^{\xi,\tau}\left[\psi\in\varOmega:\psi(t)\in\textrm{d}x\right]$
(where $t>\tau\geq0$), called the transition probability function
of the $L_{a;b}$-diffusion. Since $a(x,t)$ is uniformly elliptic,
$P_{a;b}(\tau,\xi,t,\textrm{d}x)$ has a probability density function
$p_{a;b}(\tau,\xi,t,x)$ with respect to the Lebesgue measure, so
that
\[
P_{a;b}(\tau,\xi,t,\textrm{d}x)=p_{a;b}(\tau,\xi,t,x)\textrm{d}x\quad\textrm{ for }t>\tau\geq0\textrm{ and }\xi\in\mathbb{R}^{d}.
\]
The transition density function $p_{a;b}(\tau,\xi,t,x)$ (for $t>\tau\geq0$)
associated with the $L_{a;b}$-diffusion is positive and H\"older
continuous (cf. \citep{Aronson 1968}).
\begin{lem}
\label{lem4.1}Let $p_{a;b}(\tau,\xi,t,x)$ be the transition probability
density function of the $L_{a;b}$-diffusion. Then
\[
p_{a;b}(\tau,\xi,t,x)=\Gamma_{a;b}^{\star}(\xi,\tau;x,t)=\Gamma_{a;\tilde{b},\tilde{c}}(x,t;\xi,\tau)
\]
for all $0\leq\tau<t$ and $x,\xi\in\mathbb{R}^{d}$, where
\begin{equation}
\tilde{b}^{i}(x,t)=-b^{i}(x,t)+2\nu\sum_{j=1}^{d}\frac{\partial a^{ij}(x,t)}{\partial x^{j}}\label{re-01-1}
\end{equation}
and
\begin{equation}
\tilde{c}(x,t)=-\nabla\cdot b(x,t)+\nu\sum_{i,j=1}^{d}\frac{\partial^{2}a^{ij}(x,t)}{\partial x^{j}\partial x^{i}}.\label{re-02-1}
\end{equation}
\end{lem}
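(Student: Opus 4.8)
\emph{Proof strategy.} I would prove the two displayed identities separately. The second one, $\Gamma^\star_{a;b}(\xi,\tau;x,t)=\Gamma_{a;\tilde b,\tilde c}(x,t;\xi,\tau)$, is purely a matter of the algebra of formal adjoints together with the duality already recorded in Lemma~\ref{lem3.1}; the first one, $p_{a;b}(\tau,\xi,t,x)=\Gamma^\star_{a;b}(\xi,\tau;x,t)$, is Kolmogorov's backward equation, which I would obtain from the martingale characterisation of $\mathbb{P}^{\xi,\tau}_{a;b}$ together with uniqueness of the backward fundamental solution. Under the standing hypotheses (all data smooth and bounded, $a$ uniformly elliptic) all objects in sight are well defined: by \citep{Friedman 1964} the forward and backward fundamental solutions $\Gamma_{a;\,\cdot}$ and $\Gamma^\star_{a;\,\cdot}$ exist, are classical, and are the unique solutions of their respective Cauchy problems within the class of functions obeying Gaussian bounds, while by \citep{Aronson 1968} the transition density $p_{a;b}$ exists, is jointly continuous and positive, and satisfies two-sided Gaussian estimates.

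\emph{Second identity.} First I would observe that, setting $c=0$ in \eqref{re-01}--\eqref{re-02}, the coefficients $b_\star,c_\star$ become exactly the coefficients $\tilde b,\tilde c$ of \eqref{re-01-1}--\eqref{re-02-1}; in other words $L_{a;b}^\star=L_{a;\tilde b,\tilde c}$. Because the formal adjoint is an involution, one checks directly from \eqref{re-01}--\eqref{re-02} that $(\tilde b)_\star=b$ and $(\tilde c)_\star=0$: the first because $-\bigl(-b^i+2\nu\sum_j\partial_j a^{ij}\bigr)+2\nu\sum_j\partial_j a^{ij}=b^i$, and the second by the cancellation $\tilde c-\nabla\cdot\tilde b+\nu\sum_{i,j}\partial_i\partial_j a^{ij}=0$, using $\nabla\cdot\tilde b=-\nabla\cdot b+2\nu\sum_{i,j}\partial_i\partial_j a^{ij}$. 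Applying Lemma~\ref{lem3.1} to the operator $L_{a;\tilde b,\tilde c}$ (that is, with $b$ replaced by $\tilde b$ and $c$ by $\tilde c$) then gives $\Gamma_{a;\tilde b,\tilde c}(x,t;\xi,\tau)=\Gamma^\star_{a;(\tilde b)_\star,(\tilde c)_\star}(\xi,\tau;x,t)=\Gamma^\star_{a;b}(\xi,\tau;x,t)$, which is the second identity.

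\emph{First identity.} Fix $t>0$ and a bounded smooth function $\varphi$ on $\mathbb{R}^d$, and put $u(\xi,s)=\int_{\mathbb{R}^d}\Gamma^\star_{a;b}(\xi,s;y,t)\varphi(y)\,\mathrm{d}y$ for $s\le t$. By the defining properties of the backward fundamental solution, $u$ solves $(L_{a;b}+\partial_s)u=0$ on $\mathbb{R}^d\times(-\infty,t)$ with $u(\cdot,s)\to\varphi$ as $s\uparrow t$, and by parabolic (Schauder) estimates $u\in C^{2,1}(\mathbb{R}^d\times(-\infty,t])$ with $u$ bounded on $\mathbb{R}^d\times[0,t]$. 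Feeding $f=u$ into the martingale property of $\mathbb{P}^{\xi,\tau}_{a;b}$ yields, for $\tau\le s\le t$, $M^{[u]}_s=u(\psi(s),s)-u(\xi,\tau)$ (the integral term vanishing since $(L_{a;b}+\partial_r)u\equiv0$ on $[\tau,t)$), which is a bounded local martingale, hence a martingale; therefore $\mathbb{E}^{\xi,\tau}_{a;b}[u(\psi(s),s)]=u(\xi,\tau)$ for all such $s$. Letting $s\uparrow t$ and using the continuity of $u$ up to time $t$ together with dominated convergence gives $u(\xi,\tau)=\mathbb{E}^{\xi,\tau}_{a;b}[\varphi(\psi(t))]=\int_{\mathbb{R}^d}p_{a;b}(\tau,\xi,t,y)\varphi(y)\,\mathrm{d}y$. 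Comparing with the definition of $u$ and letting $\varphi$ range over all bounded smooth functions, the continuity in $y$ of both $p_{a;b}(\tau,\xi,t,\cdot)$ and $\Gamma^\star_{a;b}(\xi,\tau;\cdot,t)$ forces $p_{a;b}(\tau,\xi,t,x)=\Gamma^\star_{a;b}(\xi,\tau;x,t)$ for every $x$, which together with the previous step completes the proof.

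\emph{Main obstacle.} I expect the real care to lie in two places. The first is clerical but error-prone: keeping straight which space--time pair is the ``running'' argument where the PDE holds and which is the ``source'', so that $u$ above genuinely solves a \emph{backward} equation in $(\xi,s)$ with the correct terminal datum, and so that the variable integrated against $\varphi$ is the one carrying the transition density. The second is the upgrade of the local martingale $M^{[u]}$ to a true martingale: under the blanket smoothness and boundedness assumptions this is immediate from boundedness of $u$, but to run the argument under the weaker hypotheses listed at the end of Section~2 it is cleanest to stop at exit times of large balls, use the resulting optional-stopping identity, and pass to the limit using the Aronson Gaussian upper bound on $p_{a;b}$ to dominate the tail contributions. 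Finally, the identification of Friedman's analytically constructed $\Gamma,\Gamma^\star$ with the probabilistic objects invokes uniqueness of fundamental solutions within the Gaussian class, which again holds under the stated regularity.
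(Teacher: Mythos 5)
Your proposal is correct and follows essentially the same route as the paper: the paper's proof is a one-line reduction to the well-known identity $p_{a;b}(\tau,\xi,t,x)=\Gamma_{a;b}^{\star}(\xi,\tau;x,t)$ (cited from Stroock--Varadhan) combined with Lemma \ref{lem3.1} applied to $L_{a;\tilde{b},\tilde{c}}$, which is exactly your two-step decomposition, including the verification that $(\tilde{b})_{\star}=b$ and $(\tilde{c})_{\star}=0$. The only difference is that you supply a full martingale-problem proof of the cited identity rather than quoting it, which is a correct (and welcome) expansion of the paper's citation.
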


This follows immediately from Lemma \ref{lem3.1} and the well known
fact that $p_{a;b}(\tau,\xi,t,x)=\Gamma_{a;b}^{\star}(\xi,\tau;x,t)$
for all $0\leq\tau<t$ and $x,\xi\in\mathbb{R}^{d}$ (cf. \citep{Stroock and Varadhan 1979}). 
\begin{rem}
If $a^{ij}(x,t)=\delta^{ij}$ and $b(x,t)$ is divergence free for
every $t$, that is, $\nabla\cdot b=0$ in the distribution sense,
then 
\begin{equation}
p_{b}(\tau,\xi,t,x)=\Gamma_{b}^{\star}(\xi,\tau;x,t)=\Gamma_{-b}(x,t;\xi,\tau)\label{eq:fund-de-01}
\end{equation}
for all $0\leq t<\tau$. In particular, if $\nabla\cdot b=0$, then
\begin{equation}
\Gamma_{b}(x,t;\xi,\tau)=p_{-b}(\tau,\xi,t,x)\label{eq:fund-de-02}
\end{equation}
for all $t\geq\tau\geq0$. The relationship (\ref{eq:fund-de-02})
plays the crucial r\^ole in the random vortex method (cf. \citep{Long 1988,Majda and Bertozzi 2002}).
\end{rem}

On the other hand, it is known that a forward parabolic equation can
be solved by running back the time from a future time, which gives
rise to another set of relations between fundamental solutions and
the transition probability density functions. If $T>0$ and $f(x,t)$
is a function, then define $f_{T}(x,t)=f(x,(T-t)^{+})$. 
\begin{lem}
\label{lem4.3}Let $T>0$. Then
\begin{equation}
p_{a_{T};b_{T}}(T-t,x,T-\tau,\xi)=\Gamma_{a;b}(x,t;\xi,\tau)=\Gamma_{a;\tilde{b},\tilde{c}}^{\star}(\xi,\tau;x,t)\label{eq:diff-backT1}
\end{equation}
for all $0\leq\tau<t\leq T$ and $x,\xi\in\mathbb{R}^{n}$, where
$\tilde{b}$ and $\tilde{c}$ are given by \eqref{re-01-1} and \eqref{re-02-1}
respectively.
\end{lem}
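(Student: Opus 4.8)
The plan is to reduce Lemma~\ref{lem4.3} to the two facts already in hand, namely Lemma~\ref{lem3.1} and Lemma~\ref{lem4.1}, by means of a time-reversal substitution. The key observation is that if $f_T(x,t)=f(x,(T-t)^+)$ is the time-reversed copy of $f$, then the operator $L_{a;b}-\tfrac{\partial}{\partial t}$ acting in the forward variables is carried, under $t\mapsto T-t$, into the backward operator $L_{a_T;b_T}+\tfrac{\partial}{\partial t}$, and conversely. Consequently the fundamental solution $\Gamma_{a;b}(x,t;\xi,\tau)$ of the forward equation for $L_{a;b}$, evaluated as a function of the reversed time arguments, solves the backward equation for $L_{a_T;b_T}$; and a backward fundamental solution for $L_{a_T;b_T}$ is exactly (by the classical fact quoted after Lemma~\ref{lem4.1}) the transition density $p_{a_T;b_T}$. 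Making the bookkeeping precise will give the first equality in \eqref{eq:diff-backT1}.

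More concretely, first I would record the elementary identity
\[
\left(L_{a;b}-\frac{\partial}{\partial t}\right)f(x,t)\Big|_{t\to T-s}
=\left(L_{a_T;b_T}+\frac{\partial}{\partial s}\right)f_T(x,s),
\]
valid on $0\le s\le T$, which is just the chain rule applied to $\frac{\partial}{\partial t}=-\frac{\partial}{\partial s}$ together with the definitions $a_T(x,s)=a(x,T-s)$, $b_T(x,s)=b(x,T-s)$. Next I would check the initial/terminal normalisation: the defining $\delta$-type limit $\lim_{t\downarrow\tau}\int\Gamma_{a;b}(x,t;\xi,\tau)\varphi(\xi)\,\mathrm{d}\xi=\varphi(x)$ becomes, after the substitution $t=T-s,\ \tau=T-\sigma$, the limit $\lim_{s\uparrow\sigma}\int(\cdots)\varphi(\xi)\,\mathrm{d}\xi=\varphi(x)$ appropriate to a backward fundamental solution of $L_{a_T;b_T}$. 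Uniqueness of the backward fundamental solution (which holds under our standing smoothness/ellipticity assumptions, cf.\ \citep{Friedman 1964}) then forces $\Gamma_{a;b}(x,t;\xi,\tau)=\Gamma^{\star}_{a_T;b_T}(\,\cdot\,)$ with the reversed arguments, and applying the already-proved identity $p_{a_T;b_T}(\sigma',\xi',s',x')=\Gamma^{\star}_{a_T;b_T}(\xi',\sigma';x',s')$ from Lemma~\ref{lem4.1} (with the roles of $(s,\sigma)$ and the space variables tracked carefully) delivers $p_{a_T;b_T}(T-t,x,T-\tau,\xi)=\Gamma_{a;b}(x,t;\xi,\tau)$. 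The second equality $\Gamma_{a;b}(x,t;\xi,\tau)=\Gamma^{\star}_{a;\tilde b,\tilde c}(\xi,\tau;x,t)$ is then immediate: it is precisely Lemma~\ref{lem3.1} specialised to $c=0$, since by \eqref{re-01} and \eqref{re-02} the adjoint data are $b_\star=\tilde b$ and $c_\star=\tilde c$ when $c\equiv0$.

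The step I expect to be the main obstacle is the careful matching of the three separate argument conventions in play: the forward fundamental solution $\Gamma(x,t;\xi,\tau)$ with $\tau<t$, the backward one $\Gamma^{\star}(x,t;\xi,\tau)$ with $t<\tau$, and the transition density $p(\tau,\xi,t,x)$ with $\tau<t$, each with its own ordering of space and time slots. Under the reversal $t\mapsto T-t$ the inequalities flip, so one must verify that the reversed time arguments land in the admissible range for the object being invoked and that no slot is swapped incorrectly; this is the bookkeeping that makes the two displayed equalities in \eqref{eq:diff-backT1} come out with exactly the stated argument lists. Beyond this, everything is a direct consequence of the chain rule and the two cited lemmas, so no genuinely new analytic input is required — in particular one does not need to re-derive existence or Gaussian bounds for $p_{a_T;b_T}$, since $a_T$ inherits the uniform ellipticity \eqref{u-c01} and $b_T$ inherits boundedness from $a,b$.
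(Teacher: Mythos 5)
Your proposal is correct and is essentially the paper's own argument run in the mirror direction: the paper time-reverses $p_{a_T;b_T}$ and identifies it with $\Gamma_{a;b}$ via uniqueness of the \emph{forward} fundamental solution, whereas you time-reverse $\Gamma_{a;b}$ and identify it with $\Gamma^{\star}_{a_T;b_T}=p_{a_T;b_T}$ via uniqueness of the \emph{backward} one — the same chain-rule identity, delta-limit check, uniqueness appeal, and use of Lemma \ref{lem3.1} for the second equality. The difference is immaterial.
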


\begin{proof}
Let $\Theta(x,t;\xi,\tau)=p_{a_{T};b_{T}}(T-t,x,T-\tau,\xi)$ for
$0\leq\tau<t\leq T$ and $\xi,x\in\mathbb{R}^{d}$. As a function
of $(x,t)$, $p_{a_{T};b_{T}}(t,x;\tau,\xi)$ (for $t<\tau$) solves
the backward parabolic equation, so that $\Theta$ solves the forward
parabolic equation
\begin{equation}
\left(L_{a_{T};b_{T}}-\frac{\partial}{\partial t}\right)\Theta(x,t;\xi,\tau)=0\label{eq:diff-bac-01-1}
\end{equation}
for $0\leq\tau<t\leq T$. Since $b_{T}(x,T-t)=b(x,t)$ and $a_{T}(x,T-t)=a(x,t)$
for $0\leq t\leq T$, the previous equality is equivalent to the forward
parabolic equation: 
\[
\left(L_{a;b}-\frac{\partial}{\partial t}\right)\Theta(x,t;\xi,\tau)=0
\]
on $\mathbb{R}^{d}\times[\tau,T]$ for every $\tau\in(0,T]$. Suppose
$\varphi(\xi)$ is continuous and bounded on $\mathbb{R}^{d}$, then
\begin{align*}
\int_{\mathbb{R}^{d}}\Theta(x,t;\xi,\tau)\varphi(\xi)\textrm{d}\xi & =\int_{\mathbb{R}^{d}}p_{a_{T};b_{T}}(T-t,x,T-\tau,\xi)\varphi(\xi)\textrm{d}\xi\\
 & \rightarrow\varphi(x)
\end{align*}
as $T-t\uparrow T-\tau$, i.e. as $t\downarrow\tau$. By the uniqueness
of the fundamental solution $\Theta(x,t;\xi,\tau)$ coincides with
$\Gamma_{a;b}(x,t;\xi,\tau)$, and therefore the conclusion follows
immediately. 
\end{proof}
\begin{rem}
For the case where $a^{ij}(x,t)=\delta^{ij}$ and $b(x,t)$ is divergence-free,
then $\tilde{b}=-b$ and $\tilde{c}=0$, so that 
\begin{equation}
p_{b^{T}}(T-t,x,T-\tau,\xi)=\Gamma_{b}(x,t;\xi,\tau)=\Gamma_{-b}^{\star}(\xi,\tau;x,t)=p_{-b}(\tau,\xi,t,x)\label{eq:rev-01}
\end{equation}
for all $0\leq\tau<t\leq T$ and $x,\xi\in\mathbb{R}^{n}$. Therefore
for this case the fundamental solution $\Gamma_{-b}^{\star}(\xi,\tau;x,t)$
is a probability density in $x$ and $\xi$ respectively.
\end{rem}

In general by definition $\Gamma_{a;\tilde{b},\tilde{c}}^{\star}(\xi,\tau;x,t)$
is a probability density in the variable $\xi$, while in general
it is not a probability density function with respect to the variable
$x$, thus can not be a transition probability density function of
an diffusion. A sufficient condition to ensure that $\Gamma_{a;\tilde{b},\tilde{c}}^{\star}(\xi,\tau;x,t)$
is a transition probability function (with respect to $x$) of some
diffusion is given in the next section.

\section{The Feynman-Kac formula}

The Feynman-Kac formula is a functional integration representation
to the solutions $f^{i}(x,t)$ of a backward parabolic equation:
\begin{equation}
\left(L_{a;b}+\frac{\partial}{\partial t}\right)f^{i}(x,t)+\sum_{j=1}^{n}q_{j}^{i}(x,t)f^{j}(x,t)=0\quad\textrm{ in }\mathbb{R}^{d}\times[0,\infty),\label{bp-fey-01}
\end{equation}
where $i=1,\ldots,n$, and $n$ is some positive integer. 

For each $\psi\in C([0,\infty),\mathbb{R}^{d})$, let $Q(\tau,\psi,t)=(Q_{j}^{i}(\tau,\psi,t))_{i,j\leq n}$
be the solution to the ordinary differential equations:
\begin{equation}
\frac{\textrm{d}}{\textrm{d}t}Q_{j}^{i}(t)=Q_{k}^{i}(t)q_{j}^{k}(\psi(t),t),\quad Q_{j}^{i}(\tau)=\delta_{j}^{i}\label{gu-01}
\end{equation}
for $t\geq\tau$, if the solution exists and is unique.
\begin{lem}
\label{lem5.1}(The Feynman-Kac formula) Suppose $a^{ij}(x,t)$ are
uniformly continuous, $b^{i}(x,t)$ are bounded and Borel measurable,
and $q_{j}^{i}(x,t)$ are bounded and continuous. Suppose $f^{i}(x,t)$
are $C^{2,1}$ solutions to \eqref{bp-fey-01} with bounded first
derivatives. Then
\begin{equation}
f^{i}(\eta,\tau)=\int_{\varOmega}Q_{j}^{i}(\tau,\psi;t)f^{j}(\psi(t),t)\mathbb{P}_{a;b}^{\eta,\tau}(\textrm{d}\psi)\label{b-fey-s1}
\end{equation}
for $t\geq\tau$ and $\eta\in\mathbb{R}^{d}$.
\end{lem}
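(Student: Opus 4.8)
The plan is to prove the Feynman-Kac formula \eqref{b-fey-s1} by applying It\^o's formula (equivalently, the local martingale property of the $L_{a;b}$-diffusion) to the process $t \mapsto Q_j^i(\tau,\psi;t) f^j(\psi(t),t)$ and showing it is a martingale under $\mathbb{P}_{a;b}^{\eta,\tau}$. First I would fix $i$ and consider the process $N_s = Q_k^i(\tau,\psi;s) f^k(\psi(s),s)$ for $s \in [\tau,t]$. By the local martingale property of $\mathbb{P}_{a;b}^{\eta,\tau}$ applied to each $f^k$ (which is $C^{2,1}$ with bounded first derivatives), the process $f^k(\psi(s),s) - f^k(\eta,\tau) - \int_\tau^s (L_{a;b} + \partial_r) f^k(\psi(r),r)\,\textrm{d}r$ is a local martingale. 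Combining this with the ODE \eqref{gu-01} for $Q$ (whose entries are of bounded variation in $s$ since $q$ is bounded and continuous), an integration-by-parts / product-rule computation gives
\begin{align*}
\textrm{d}N_s &= \left(\textrm{d}Q_k^i\right) f^k(\psi(s),s) + Q_k^i \,\textrm{d}\!\left[f^k(\psi(s),s)\right] \\
&= Q_l^i q_k^l(\psi(s),s) f^k(\psi(s),s)\,\textrm{d}s + Q_k^i\left(L_{a;b}+\partial_s\right)f^k(\psi(s),s)\,\textrm{d}s + \textrm{d}(\text{loc. mart.}).
\end{align*}
Using the PDE \eqref{bp-fey-01}, which says $(L_{a;b}+\partial_s) f^k = -q_j^k f^j$, the two drift terms cancel: indeed $Q_l^i q_k^l f^k - Q_k^i q_j^k f^j = 0$ after relabelling indices. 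Hence $N_s$ is a local martingale on $[\tau,t]$.

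Next I would upgrade the local martingale to a genuine martingale so that $\mathbb{E}^{\eta,\tau}[N_t] = N_\tau = f^i(\eta,\tau)$, which is exactly \eqref{b-fey-s1}. For this I would use the standing boundedness hypotheses: $q$ is bounded, so Gr\"onwall applied to \eqref{gu-01} gives a deterministic bound $|Q_j^i(\tau,\psi;s)| \le e^{C(s-\tau)}$ uniform in $\psi$; and although $f^k$ itself is only assumed to have bounded first derivatives (hence at worst linear growth), one can either invoke an additional boundedness of $f^k$ or, more carefully, control $\sup_{\tau \le s \le t}|\psi(s)|$ via the Gaussian-type upper bounds on the transition density $p_{a;b}$ from Aronson's estimate (cf. \citep{Aronson 1968}) together with the sublinear growth of $f^k$. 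A clean localisation argument — stopping at the exit time $\sigma_R$ of a ball of radius $R$, passing to the limit $R\to\infty$ using dominated convergence justified by these bounds — then yields the martingale property on $[\tau,t]$ and hence the identity.

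The main obstacle I anticipate is precisely this last integrability / uniform-integrability step: the statement only assumes $f^i$ has bounded first derivatives, not that $f^i$ is bounded, so $f^i(\psi(t),t)$ need not be bounded and one must genuinely exploit the Gaussian decay of the diffusion's law to get the dominated-convergence bound on the stopped martingales $N_{s\wedge\sigma_R}$. Everything else — the product rule for a finite-variation process times an It\^o process, and the algebraic cancellation of the drift using \eqref{gu-01} and \eqref{bp-fey-01} — is routine. One subtlety worth flagging is that the coupling in \eqref{bp-fey-01} is through the matrix $q$, so the argument is genuinely vector-valued: the cancellation only works because $Q(\tau,\psi;s)$ is the full fundamental matrix of the linear ODE system \eqref{gu-01} along the path $\psi$, which is why the product $Q_k^i f^k$ (summed over $k$) is the right quantity to track rather than any single component.
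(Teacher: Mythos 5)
Your argument is correct and is precisely the approach the paper itself takes: Lemma \ref{lem5.1} is stated without proof as the classical Feynman--Kac formula, but the product-rule/drift-cancellation computation you describe (It\^o's formula applied to $N_s=Q^i_k(s)f^k(\psi(s),s)$, with the finite-variation factor $Q$ solving \eqref{gu-01} and the drift cancelling via \eqref{bp-fey-01}) is exactly the argument the paper writes out for the boundary-value analogue, Lemma \ref{lem7.1}. One simplification of the integrability step you flag as the main obstacle: once the drifts cancel, $N_s-N_\tau$ is (in the SDE realisation) a stochastic integral whose integrand $Q^i_k\,\sigma\,\nabla f^k$ is \emph{bounded} by your Gr\"onwall bound on $Q$ together with the assumed boundedness of $\nabla f$ and of $\sigma$, so it is already a genuine $L^2$ martingale; in the pure martingale-problem formulation the localisation $R\to\infty$ is closed by the linear growth of $f$ and the integrability of $\sup_{\tau\le s\le t}|\psi(s)|$ for a diffusion with bounded coefficients, with no need for the full Aronson Gaussian bounds.
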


As a consequence we have the following
\begin{lem}
\label{lem5.2} Under the same regularity assumptions on $a(x,t)$
and $b(x,t)$ as in Lemma \ref{lem5.1}, and suppose $c(x,t)$ is
bounded and continuous. Then
\begin{equation}
\Gamma_{a;b,c}^{\star}(\xi,\tau;x,t)=p_{a;b}(\tau,\xi,t,x)\int_{\varOmega}C(\tau,\psi;t)\mathbb{P}_{a;b}^{\xi,\tau\rightarrow x,t}(\textrm{d}\psi)\label{rep-g1}
\end{equation}
for $x,\xi\in\mathbb{R}^{d}$ and $t>\tau\geq0$, where $C(\tau,\psi;r)$
for each $\psi\in C([0,\infty);\mathbb{R}^{d})$ is the unique solution
to the ordinary differential equation:
\begin{equation}
C(\tau,\psi;t)=1+\int_{\tau}^{\tau\wedge t}C(\tau,\psi;s)c(\psi(s),s)\textrm{d}s\label{gu-04}
\end{equation}
for $t\geq0$. 
\end{lem}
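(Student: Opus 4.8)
\emph{Proof strategy.} The plan is to derive \eqref{rep-g1} from the Feynman--Kac formula of Lemma \ref{lem5.1}, applied with $n=1$ and $q_{1}^{1}=c$, after first mollifying the fundamental solution $\Gamma_{a;b,c}^{\star}(\xi,\tau;x,t)$ in the variable $x$ so that the Feynman--Kac formula becomes applicable, and then disintegrating the resulting path-space average along the pinned diffusion measure $\mathbb{P}_{a;b}^{\xi,\tau\rightarrow x,t}$.

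First I would fix $x\in\mathbb{R}^{d}$ and $t>0$ and, for a smooth bounded function $\varphi$ on $\mathbb{R}^{d}$ with bounded derivatives, set
\[
u_{\varphi}(\xi,\tau)=\int_{\mathbb{R}^{d}}\Gamma_{a;b,c}^{\star}(\xi,\tau;y,t)\varphi(y)\,\textrm{d}y,\qquad 0\le\tau\le t .
\]
By the defining properties of $\Gamma_{a;b,c}^{\star}$ recalled above, $u_{\varphi}$ solves the backward parabolic problem $\left(L_{a;b}+\frac{\partial}{\partial\tau}\right)u_{\varphi}+c\,u_{\varphi}=0$ on $\mathbb{R}^{d}\times[0,t)$ with $u_{\varphi}(\cdot,\tau)\to\varphi$ as $\tau\uparrow t$; under the standing smoothness and boundedness assumptions on $a,b,c$ the classical parabolic theory shows that $u_{\varphi}\in C^{2,1}(\mathbb{R}^{d}\times[0,t])$ with bounded first-order spatial derivatives. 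Since $L_{a;b,c}=L_{a;b}+c$, this is exactly the setting of Lemma \ref{lem5.1} with $n=1$ and $q_{1}^{1}=c$, in which case the matrix $Q$ of \eqref{gu-01} reduces to the scalar solution $C$ of \eqref{gu-04}; applying it on the time interval $[0,t]$ gives
\[
u_{\varphi}(\xi,\tau)=\int_{\varOmega}C(\tau,\psi;t)\,\varphi(\psi(t))\,\mathbb{P}_{a;b}^{\xi,\tau}(\textrm{d}\psi),\qquad 0\le\tau\le t .
\]
Since $|C(\tau,\psi;t)|\le e^{\Vert c\Vert_{\infty}(t-\tau)}$ and $y\mapsto\Gamma_{a;b,c}^{\star}(\xi,\tau;y,t)$ is integrable, both sides of this identity depend continuously on $\varphi$ under bounded pointwise convergence, so it extends to every bounded continuous $\varphi$.

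The final step is the disintegration over the pinned measure. The functional $\psi\mapsto C(\tau,\psi;t)\varphi(\psi(t))$ depends on $\psi$ only through its restriction to $[\tau,t]$; the pinned measure $\mathbb{P}_{a;b}^{\xi,\tau\rightarrow x,t}$ is the regular conditional law of $\mathbb{P}_{a;b}^{\xi,\tau}$ given $\{\psi(t)=x\}$; and the law of $\psi(t)$ under $\mathbb{P}_{a;b}^{\xi,\tau}$ is $p_{a;b}(\tau,\xi,t,x)\,\textrm{d}x$. Hence the tower property yields
\[
\int_{\varOmega}C(\tau,\psi;t)\varphi(\psi(t))\,\mathbb{P}_{a;b}^{\xi,\tau}(\textrm{d}\psi)=\int_{\mathbb{R}^{d}}\varphi(x)\left(\int_{\varOmega}C(\tau,\psi;t)\,\mathbb{P}_{a;b}^{\xi,\tau\rightarrow x,t}(\textrm{d}\psi)\right)p_{a;b}(\tau,\xi,t,x)\,\textrm{d}x .
\]
Combining this with the definition of $u_{\varphi}$ gives
\[
\int_{\mathbb{R}^{d}}\Gamma_{a;b,c}^{\star}(\xi,\tau;x,t)\,\varphi(x)\,\textrm{d}x=\int_{\mathbb{R}^{d}}\left(p_{a;b}(\tau,\xi,t,x)\int_{\varOmega}C(\tau,\psi;t)\,\mathbb{P}_{a;b}^{\xi,\tau\rightarrow x,t}(\textrm{d}\psi)\right)\varphi(x)\,\textrm{d}x
\]
for every bounded continuous $\varphi$; since $x\mapsto\Gamma_{a;b,c}^{\star}(\xi,\tau;x,t)$ and $x\mapsto p_{a;b}(\tau,\xi,t,x)$ are continuous and the inner path-integral is a bounded (indeed continuous) function of $x$, the two integrands agree for every $x$, which is \eqref{rep-g1}.

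I expect the main obstacle to be the singularity of the fundamental solution itself: as $\tau\uparrow t$, $\Gamma_{a;b,c}^{\star}(\,\cdot\,,\tau;x,t)$ concentrates at the Dirac mass $\delta_{x}$, so it is not $C^{2,1}$ up to $\tau=t$ and cannot be inserted directly into Lemma \ref{lem5.1}; this is what forces the mollification, and hence the extra work of recovering the pointwise identity \eqref{rep-g1} from its tested form. The one delicate point in that recovery is the disintegration of the path-space expectation, which relies on $C(\tau,\psi;t)$ being measurable with respect to the path up to time $t$ and on identifying $\mathbb{P}_{a;b}^{\xi,\tau\rightarrow x,t}$ with the conditional law of the $L_{a;b}$-diffusion given its position at time $t$. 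Alternatively, one could bypass Lemma \ref{lem5.1} altogether and expand $\Gamma_{a;b,c}^{\star}$ in its Duhamel/Dyson series in $c$, recognising each term as a finite-dimensional average over the pinned diffusion matched against the series expansion of $C(\tau,\psi;t)=\exp\!\big(\int_{\tau}^{t}c(\psi(s),s)\,\textrm{d}s\big)$.
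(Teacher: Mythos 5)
Your proposal is correct and follows essentially the same route as the paper: apply the scalar case of Lemma \ref{lem5.1} to a solution of the backward equation with prescribed terminal data, then disintegrate the path-space expectation by conditioning on $\psi(t)=x$ and read off the identity against the fundamental-solution representation of that solution. The only difference is that you make explicit the mollification by a test function $\varphi$ and the recovery of the pointwise identity by continuity, steps the paper leaves implicit.
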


\begin{proof}
Let $T>0$ and $f$ be a $C^{2,1}$ solution to the backward parabolic
equation:
\[
\left(L_{a;b,c}+\frac{\partial}{\partial t}\right)f=0
\]
such that $f(x,t)\rightarrow f_{0}(x)$ as $t\uparrow T$. The previous
equation may be rewritten as the following:
\[
\left(L_{a;b}+\frac{\partial}{\partial t}\right)f(x,t)+c(x,t)f(x,t)=0
\]
so that, according to Lemma \ref{lem5.1}
\[
f(\xi,\tau)=\int_{\varOmega}C(\tau,\psi;T)f(\psi(T),T)\mathbb{P}_{a;b}^{\xi,\tau}(\textrm{d}\psi)
\]
for $0\leq\tau<T$ , and therefore
\[
f(\xi,\tau)=\int_{\mathbb{R}^{d}}\left(p_{a;b}(\tau,\xi,T,x)\int_{\varOmega}C(\tau,\psi;T)\mathbb{P}_{a;b}^{\xi,\tau}\left[\left.\textrm{d}\psi\right|\psi(T)=x\right]\right)f(x,T)\textrm{d}x
\]
which yields that
\[
\Gamma_{a;b,c}^{\star}(\xi,\tau;x,T)=p_{a;b}(\tau,\xi,T,x)\int_{\varOmega}C(\tau,\psi;T)\mathbb{P}_{a;b}^{\xi,\tau}\left[\left.\textrm{d}\psi\right|\psi(T)=x\right].
\]
The proof is complete. 
\end{proof}
\begin{lem}
\label{thm5.3}Let $T>0$. Then
\[
p_{a_{T},b_{T}}(T-t,x,T-\tau,\xi)=p_{a;\tilde{b}}(\tau,\xi,t,x)\int_{\varOmega}\tilde{C}(\tau,\psi;t)\mathbb{P}_{a;\tilde{b}}^{\xi,\tau\rightarrow x,t}(\textrm{d}\psi)
\]
for any $0\leq\tau<t\leq T$ and $x,\xi\in\mathbb{R}^{d}$, where
$\tilde{b}$ and $\tilde{c}$ are given by (\ref{re-01-1}) and (\ref{re-02-1})
respectively, and 
\[
\tilde{C}(\tau,\psi;t)=1+\int_{\tau}^{\tau\wedge t}\tilde{C}(\tau,\psi;s)\tilde{c}(\psi(s),s)\textrm{d}s
\]
for $t\geq0$ and $\psi\in C([0,\infty);\mathbb{R}^{d})$.
\end{lem}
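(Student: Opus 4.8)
The plan is to read this off as the composition of Lemma~\ref{lem4.3} with Lemma~\ref{lem5.2}: first rewrite the time‑reversed transition density as a backward fundamental solution, and then feed that fundamental solution into the Feynman--Kac representation of Lemma~\ref{lem5.2}, applied to the data $(a,\tilde b,\tilde c)$ in place of $(a,b,c)$.

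Concretely, the first step is to invoke Lemma~\ref{lem4.3}, which gives, for all $0\le\tau<t\le T$ and $x,\xi\in\mathbb{R}^{d}$,
\[
p_{a_{T},b_{T}}(T-t,x,T-\tau,\xi)=\Gamma_{a;b}(x,t;\xi,\tau)=\Gamma_{a;\tilde b,\tilde c}^{\star}(\xi,\tau;x,t),
\]
with $\tilde b,\tilde c$ as in \eqref{re-01-1}--\eqref{re-02-1}. The second step is to apply Lemma~\ref{lem5.2} verbatim, but with $b$ replaced by $\tilde b$ and $c$ replaced by $\tilde c$. This substitution is legitimate because, under the standing assumptions on $a$ and $b$, the drift $\tilde b^{i}=-b^{i}+2\nu\sum_{j}\partial a^{ij}/\partial x^{j}$ is bounded and Borel measurable and the zero‑th order coefficient $\tilde c=-\nabla\cdot b+\nu\sum_{i,j}\partial^{2}a^{ij}/\partial x^{j}\partial x^{i}$ is bounded and continuous, so the hypotheses of Lemma~\ref{lem5.2} (and of Lemma~\ref{lem5.1} behind it) are met; boundedness of $\tilde c$ moreover ensures, by a Gronwall estimate along each path $\psi$, that the linear integral equation defining $\tilde C(\tau,\psi;\cdot)$ has a unique solution. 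Lemma~\ref{lem5.2} then yields
\[
\Gamma_{a;\tilde b,\tilde c}^{\star}(\xi,\tau;x,t)=p_{a;\tilde b}(\tau,\xi,t,x)\int_{\varOmega}\tilde C(\tau,\psi;t)\,\mathbb{P}_{a;\tilde b}^{\xi,\tau\rightarrow x,t}(\textrm{d}\psi),
\]
and chaining the two displays is the whole argument.

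I do not expect a genuine obstacle; the work is almost entirely bookkeeping. The two points that deserve a line of care are, first, that the adjoint/time‑reversal procedure produces exactly the coefficient $\tilde c$ of \eqref{re-02-1}, so that the ODE governing $\tilde C$ is the correct one (and not, say, the one with a spurious nonzero base $c$); and second, that the pinned measure on the right is that of the $L_{a;\tilde b}$‑diffusion — whose transition density $p_{a;\tilde b}(\tau,\xi,t,x)$ is strictly positive by uniform ellipticity, so the conditioning $\mathbb{P}_{a;\tilde b}^{\xi,\tau\rightarrow x,t}$ is well defined — rather than of the original $L_{a;b}$‑diffusion. Both are automatic once the substitution $(b,c)\mapsto(\tilde b,\tilde c)$ is carried through consistently in Lemma~\ref{lem5.2}.
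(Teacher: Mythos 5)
Your proposal is correct and follows exactly the paper's own route: the paper likewise first identifies $p_{a_{T},b_{T}}(T-t,x,T-\tau,\xi)$ with $\Gamma_{a;\tilde b,\tilde c}^{\star}(\xi,\tau;x,t)$ via Lemma \ref{lem4.3} and then applies Lemma \ref{lem5.2} with $(b,c)$ replaced by $(\tilde b,\tilde c)$. Your added remarks on the regularity of $\tilde b,\tilde c$ and the well-posedness of the gauge ODE are sensible but not part of the paper's (very terse) argument.
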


\begin{proof}
We have
\[
p_{a_{T},b_{T}}(T-t,x,T-\tau,\xi)=\Gamma_{a;\tilde{b},\tilde{c}}^{\star}(\xi,\tau;x,t)
\]
for $t>\tau\geq0$, so the corollary follows immediately from Lemma
\ref{lem5.2}.
\end{proof}
We may apply Lemma \ref{thm5.3} to the $\mathscr{L}_{a;b}$-diffusion,
whose transition probability density function is denoted by $h_{a;b}(t,x,\tau,\xi)$
for $0\leq t<\tau$. The formal adjoint operator of $\mathscr{L}_{a;b}$
is given by 
\[
\mathscr{L}_{a;b}^{\star}=\nu\sum_{i,j=1}^{d}\frac{\partial}{\partial x^{j}}a^{ij}\frac{\partial}{\partial x^{i}}-\sum_{i=1}^{d}b^{i}\frac{\partial}{\partial x^{i}}-\nabla\cdot b
\]
and therefore $\tilde{c}=\nabla\cdot b$. Therefore we have the following
consequence.
\begin{lem}
\label{thm5.3-1}Suppose $\nabla\cdot b(\cdot,t)=0$ in the distribution
sense for every $t\geq0$. Then for every $T>0$
\begin{equation}
h_{a_{T};b_{T}}(T-t,x,T-\tau,\xi)=h_{a;-b}(\tau,\xi,t,x)\label{div-04}
\end{equation}
for $0\leq\tau<t\leq T$ and $\xi,x\in\mathbb{R}^{d}$.
\end{lem}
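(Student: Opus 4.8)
The plan is to deduce the identity from Lemma~\ref{thm5.3} by replacing the divergence-form operator $\mathscr{L}_{a;b}$ with the non-divergence operator $L_{a;\hat{b}}$ to which it is equal. Recall from \eqref{div-03} that $\mathscr{L}_{a;b}=L_{a;\hat{b}}$ with $\hat{b}^{i}=b^{i}+\nu\sum_{j}\partial a^{ij}/\partial x^{j}$; hence the $\mathscr{L}_{a;b}$-diffusion \emph{is} the $L_{a;\hat{b}}$-diffusion and $h_{a;b}(\tau,\xi,t,x)=p_{a;\hat{b}}(\tau,\xi,t,x)$. First I would record the routine fact that the ``hat'' operation commutes with time reversal: since $(a_{T})^{ij}(x,s)=a^{ij}(x,(T-s)^{+})$, so does $\partial_{x^{j}}a^{ij}$, and therefore $\widehat{(b_{T})}=(\hat{b})_{T}$, which gives $h_{a_{T};b_{T}}=p_{a_{T};(\hat{b})_{T}}$.

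Next I would apply Lemma~\ref{thm5.3} with $b$ replaced throughout by $\hat{b}$, obtaining
\[
h_{a_{T};b_{T}}(T-t,x,T-\tau,\xi)=p_{a_{T};(\hat{b})_{T}}(T-t,x,T-\tau,\xi)=p_{a;\widetilde{\hat{b}}}(\tau,\xi,t,x)\int_{\varOmega}\widetilde{\hat{C}}(\tau,\psi;t)\,\mathbb{P}_{a;\widetilde{\hat{b}}}^{\xi,\tau\to x,t}(\textrm{d}\psi),
\]
where $\widetilde{\hat{b}}$ and its associated potential $\widetilde{\hat{c}}$ are obtained from \eqref{re-01-1}--\eqref{re-02-1} with $b\mapsto\hat{b}$. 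A one-line computation, using that the mixed second partials of $a^{ij}$ commute, gives $\widetilde{\hat{b}}^{i}=-\hat{b}^{i}+2\nu\sum_{j}\partial a^{ij}/\partial x^{j}=-b^{i}+\nu\sum_{j}\partial a^{ij}/\partial x^{j}=\widehat{(-b)}^{i}$, hence $p_{a;\widetilde{\hat{b}}}=p_{a;\widehat{(-b)}}=h_{a;-b}$; the same computation shows that the potential $\widetilde{\hat{c}}$ coincides, up to sign, with $\nabla\cdot b$ (this is exactly the observation recorded just before the statement, that $\mathscr{L}_{a;b}^{\star}$ carries the zeroth-order term $-\nabla\cdot b$).

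Finally I would invoke the hypothesis $\nabla\cdot b(\cdot,t)=0$, which holds pointwise since $b$ is smooth, so that $\widetilde{\hat{c}}\equiv 0$. Then the multiplicative functional solving $\widetilde{\hat{C}}(\tau,\psi;t)=1+\int_{\tau}^{\tau\wedge t}\widetilde{\hat{C}}(\tau,\psi;s)\widetilde{\hat{c}}(\psi(s),s)\,\textrm{d}s$ is identically $1$, the path integral collapses to the total mass $1$ of the pinned bridge measure $\mathbb{P}_{a;\widetilde{\hat{b}}}^{\xi,\tau\to x,t}$, and the displayed chain of equalities reduces to $h_{a_{T};b_{T}}(T-t,x,T-\tau,\xi)=h_{a;-b}(\tau,\xi,t,x)$, as claimed.

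I do not expect a genuine obstacle: the argument is bookkeeping of the successive corrections $b\mapsto\hat{b}\mapsto\widetilde{\hat{b}}$ and the induced zeroth-order term, together with the trivial but essential commutation of the hat operation with time reversal, so that the data on the right-hand side of Lemma~\ref{thm5.3} are recognised as those of the $\mathscr{L}_{a;-b}$-diffusion. The only things needing a word are that the standing smoothness assumptions on $a$ make $\hat{b}$ (hence $\widetilde{\hat{b}}$) well defined and bounded Borel as required by Lemma~\ref{thm5.3}, and that ``$\nabla\cdot b=0$ in the distribution sense'' upgrades to the pointwise identity used above.
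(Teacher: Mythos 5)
Your proposal is correct and is essentially the paper's own proof: the paper likewise applies Lemma \ref{thm5.3} to $\mathscr{L}_{a;b}=L_{a;\hat{b}}$, observes that the resulting potential reduces to $\pm\nabla\cdot b=0$ so the gauge functional is identically $1$, and reads off the identity. Your version merely makes explicit the bookkeeping ($\widetilde{\hat{b}}=\widehat{(-b)}$, $\widetilde{\hat{c}}=-\nabla\cdot b$, and the commutation of the hat operation with time reversal) that the paper leaves implicit.
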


\begin{proof}
Apply Lemma 3.3 to the differential operator $\mathscr{L}_{a;b}=L_{a;\hat{b}}$.
Since $\nabla\cdot b=0$ identically for every $t$, $\tilde{c}=0$,
and therefore the gauge process $\tilde{C}\equiv1$, the equality
follows immediately. 
\end{proof}

\section{Diffusion bridges}

In this section we establish a duality among the conditional laws
of the $\mathscr{L}_{a;b}$-diffusions. Let $\mathbb{Q}_{a;b}^{\eta,\tau}$
be the distribution of the $\mathscr{L}_{a;b}$-diffusion started
from $\eta$ at $\tau\geq0$. If $T>0$ then $\mathbb{Q}_{a;b}^{\eta,0\rightarrow\zeta,T}$
denotes the conditional law of the $\mathscr{L}_{a;b}$-diffusion
started from $\eta$ at instance $0$ and arrived at $\zeta$ at time
$T$. 
\begin{thm}
\label{theorem3.3} Suppose $b(x,t)$ is divergence free and bounded,
then for every $T>0$
\begin{equation}
\mathbb{Q}_{a;-b}^{\eta,0\rightarrow\zeta,T}\circ\tau_{T}=\mathbb{Q}_{a_{T};b_{T}}^{\zeta,0\rightarrow\eta,T}\label{c-duality}
\end{equation}
where $\tau_{T}$ is the time reverse which sends $\psi$ to $\tau_{T}\psi$,
for $\psi\in C([0,T];\mathbb{R}^{d})$, that is, $\tau_{T}\psi(t)=\psi(T-t)$
for $t\in[0,T]$.
\end{thm}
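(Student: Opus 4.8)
The plan is to prove \eqref{c-duality} by testing both sides against finite-dimensional cylinder functions, i.e.\ by showing that the two conditional laws have the same finite-dimensional distributions along an arbitrary partition $0=t_0<t_1<\cdots<t_m<t_{m+1}=T$. First I would write down the joint density of $(\psi(t_1),\ldots,\psi(t_m))$ under the bridge measure $\mathbb{Q}_{a;-b}^{\eta,0\to\zeta,T}$ as a product of transition densities of the $\mathscr{L}_{a;-b}$-diffusion divided by the normalising endpoint density:
\begin{equation*}
\frac{1}{h_{a;-b}(0,\eta,T,\zeta)}\,h_{a;-b}(0,\eta,t_1,x_1)h_{a;-b}(t_1,x_1,t_2,x_2)\cdots h_{a;-b}(t_m,x_m,T,\zeta),
\end{equation*}
and similarly write the joint density of $(\psi(t_1),\ldots,\psi(t_m))$ under $\mathbb{Q}_{a_T;b_T}^{\zeta,0\to\eta,T}$. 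Composing with $\tau_T$ reverses the order of the time marks: the push-forward $\mathbb{Q}_{a;-b}^{\eta,0\to\zeta,T}\circ\tau_T$ assigns to $(\psi(t_1),\ldots,\psi(t_m))$ the density obtained by evaluating the above expression at the reflected times $T-t_i$, which I can rewrite as a product over the spacings $s_i:=T-t_{m+1-i}$.

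The crux is then a purely local identity relating one transition-density factor of the $\mathscr{L}_{a;-b}$-diffusion, read backwards, to one transition-density factor of the $\mathscr{L}_{a_T;b_T}$-diffusion. This is exactly the content of Lemma \ref{thm5.3-1}: under the hypothesis that $\nabla\cdot b(\cdot,t)=0$ in the distribution sense for every $t$ — which holds here, since $-b$ is divergence free iff $b$ is — we have, for $0\le \tau<t\le T$,
\begin{equation*}
h_{a_T;b_T}(T-t,x,T-\tau,\xi)=h_{a;-b}(\tau,\xi,t,x).
\end{equation*}
Applying this to each of the $m+1$ consecutive factors converts the reflected product for $\mathbb{Q}_{a;-b}^{\eta,0\to\zeta,T}\circ\tau_T$ into the forward product of $h_{a_T;b_T}$-transition densities along $0<s_1<\cdots<s_m<T$ from $\zeta$ to $\eta$, i.e.\ the numerator in the finite-dimensional density of $\mathbb{Q}_{a_T;b_T}^{\zeta,0\to\eta,T}$. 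The normalising constants match by the same lemma applied to the single factor from time $0$ to time $T$, namely $h_{a_T;b_T}(0,\zeta,T,\eta)=h_{a;-b}(0,\eta,T,\zeta)$. Since the finite-dimensional distributions of the two measures agree on every partition, and these measures are Borel measures on the Polish space $C([0,T];\mathbb{R}^d)$ determined by their finite-dimensional marginals, the two measures coincide.

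The main obstacle I anticipate is not the algebra of multiplying transition densities but the justification that the conditional (bridge) laws are genuinely given by the product-of-densities formula above — that is, that the disintegration of $\mathbb{Q}_{a;b}^{\eta,\tau}$ with respect to the endpoint map $\psi\mapsto\psi(T)$ produces a regular conditional probability whose finite-dimensional densities have this canonical Markov-bridge form. For this one needs the strict positivity and (joint) continuity of $h_{a;b}$, which is available from the Aronson-type estimates already invoked for $p_{a;b}$ (via $\mathscr{L}_{a;b}=L_{a;\hat b}$ and the boundedness of $\hat b$ coming from smoothness of $a$ together with boundedness of $b$), so that $h_{a;b}(0,\eta,T,\zeta)>0$ and the conditioning is nondegenerate; the Markov property of the $\mathscr{L}_{a;b}$-diffusion then yields the product form. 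A secondary technical point is the measurability/continuity in the endpoints needed to pass from ``for a.e.\ $\zeta$'' to ``for every $\zeta$'', which again follows from continuity of the densities. Once these standard facts are in hand, the identity \eqref{c-duality} is a direct consequence of Lemma \ref{thm5.3-1}.
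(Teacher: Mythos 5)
Your proposal is correct and follows essentially the same route as the paper: both compute the finite-dimensional densities of the bridge via the Markov product-of-transition-densities formula (the paper cites Dellacherie--Meyer for this disintegration), apply Lemma \ref{thm5.3-1} factor by factor to each $h_{a;-b}$ transition density, observe that the normalising endpoint densities match, and conclude by equality of finite-dimensional distributions. The technical points you flag (positivity and continuity of $h$, the canonical bridge form) are exactly what the paper's citation to the Markov-bridge construction is meant to cover.
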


\begin{proof}
It is known that the conditional law of $\mathbb{Q}_{a;b}^{\eta,0}$
given $\psi(T)=\zeta$, where $\eta$ and $\zeta$ are the initial
and final points, according to (14.1) in \citep{Dellacherie and Meyer Volume D},
denoted by $\mathbb{Q}_{a;b}^{\eta,0\rightarrow\zeta,T}$, is also
Markovian (time non-homogeneous) whose transition density function
is given by
\begin{equation}
q(\tau,\xi,t,x)=\frac{h_{a;b}(\tau,\xi,t,x)h_{a;b}(t,x,T,\zeta)}{h_{a;b}(\tau,\xi,T,\zeta)}\label{eq:condit-e1}
\end{equation}
that is
\[
q(\tau,\xi,t,x)=\mathbb{Q}_{a;b}^{\eta,0\rightarrow\zeta,T}\left[\psi\in\varOmega:\psi(t)\in\textrm{d}x|\psi(\tau)=\xi\right]
\]
for $0<\tau<t\leq T$. Let $0=t_{0}<t_{1}<\cdots<t_{n}<t_{n+1}=T$.
Then
\[
\mathbb{Q}_{a;b}^{\eta,0\rightarrow\zeta,T}\left[w_{t_{0}}\in dx_{0},w_{t_{1}}\in dx_{1},\cdots,w_{t_{n}}\in dx_{n},w_{t_{n+1}}\in dx_{n+1}\right]
\]
equals the measure
\[
q(0,\eta,t_{1},x_{1})\cdots q(t_{i-1},x_{i-1},t_{i},x_{i})\cdots q(t_{n},x_{n},T,\zeta)dx_{1}\cdots dx_{n}.
\]
By using (\ref{eq:condit-e1}), this measure has a pdf
\[
\frac{h_{a;b}(0,\eta,t_{1},x_{1})\cdots h_{a;b}(t_{i-1},x_{i-1},t_{i},x_{i})\cdots h_{a;b}(t_{n},x_{n},T,\zeta)}{h_{a;b}(0,\eta,T,\zeta)}
\]
with respect to the measure $\delta_{\eta}(dx_{0})dx_{1}\cdots dx_{n}\delta_{\zeta}(dx_{n+1})$.
According to Lemma \ref{thm5.3-1}, the previous pdf equals
\[
\frac{h_{a_{T};-b_{T}}(0,\zeta,T-t_{n},x_{n})\cdots h_{a_{T};-b_{T}}(T-t_{i},x_{i},T-t_{i-1},x_{i-1})\cdots p_{a_{T};-b_{T}}(T-t_{1},x_{1},T,\eta)}{p_{a_{T};-b_{T}}(0,\zeta,T,\eta)}
\]
and we therefore conclude that
\[
\mathbb{Q}_{a;b}^{\eta,0\rightarrow\zeta,T}\left[w_{t_{0}}\in dx_{0},w_{t_{1}}\in dx_{1},\cdots,w_{t_{n}}\in dx_{n},w_{t_{n+1}}\in dx_{n+1}\right]
\]
coincides with
\[
\mathbb{Q}_{a_{T};-b_{T}}^{\zeta,0\rightarrow\eta,T}\left[w_{T-t_{n}}\in dx_{n},\cdots,w_{T-t_{1}}\in dx_{1},w_{T}\in dx_{0},w_{0}\in dx_{n+1}\right].
\]
\end{proof}

\section{Feynman-Kac formula for forward parabolic equation}

Let us begin with the following general construction of the gauge
functional. 

Given $q(x,t)=(q^{ij}(x,t))_{i,j\leq n}$,  an $n\times n$ square-matrix
valued function defined for $x\in\mathbb{R}^{d}$ for $t\geq0$. For
each $T>0$ and each continuous path $\psi\in C([0,T];\mathbb{R}^{d})$,
consider the following ordinary differential equation
\begin{equation}
\frac{\textrm{d}}{\textrm{d}t}Q_{j}^{i}(t)=\sum_{k=1}^{n}Q_{k}^{i}(t)q_{j}^{k}(\psi(t),(T-t)^{+}),\quad Q_{j}^{i}(0)=\delta_{ij}\label{ode-01}
\end{equation}
which is linear in $Q$. The solution depends on $T$ and on $\psi$
as well, so it is denoted by $Q(\psi,T;t)$. By definition
\begin{equation}
Q_{j}^{i}(\psi,T;t)=\delta_{ij}+\int_{0}^{t}\sum_{k=1}^{n}Q_{k}^{i}(\psi,T;s)q_{j}^{k}(\psi(s),(T-s)^{+})\textrm{d}s\label{ode-02}
\end{equation}
for $t\in[0,T]$, where $i,j=1,\cdots,n$.

The time reversal operator $\tau_{T}$, we recall, on $C([0,T];\mathbb{R}^{d})$
maps $\psi$ to $\tau_{T}\psi(t)=\psi(T-t)$. Therefore, if we substitute
$\psi$ by $\tau_{T}\psi$ and $t$ by $T-t$ we obtain
\begin{align*}
Q_{j}^{i}(\tau_{T}\psi,T;T-t) & =\delta_{ij}+\int_{0}^{T-t}\sum_{k=1}^{n}Q_{k}^{i}(\tau_{T}\psi,T;s)q_{j}^{k}(\psi(T-s),(T-s)^{+})\textrm{d}s\\
 & =\delta_{ij}+\int_{t}^{T}\sum_{k=1}^{n}Q_{k}^{i}(\tau_{T}\psi,T;T-s)q_{j}^{k}(\psi(s),s)\textrm{d}s.
\end{align*}
Hence we have the following elementary fact.
\begin{lem}
For each $\psi\in C([0,T];\mathbb{R}^{d})$, $Q(\psi,T;t)$ denotes
the solution to ODE (\ref{ode-01}) and $G(\psi,T;t)=Q(\tau_{T}\psi,T;T-t)$
for $t\in[0,T]$. Then $G$ is the solution to the ordinary differential
equation
\begin{equation}
\frac{d}{dt}G_{j}^{i}(t)=-\sum_{k=1}^{n}G_{k}^{i}(t)q_{j}^{k}(\psi(t),t),\quad G_{j}^{i}(T)=I.\label{back-ode-01}
\end{equation}
Moreover $Q(\psi,T;t)=G(\tau_{T}\psi,T;T-t)$ for $t\in[0,T]$. 
\end{lem}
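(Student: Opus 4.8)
The plan is to read the claim off directly from the Volterra-type reformulation of the ODE \eqref{ode-01} that was already carried out in the display immediately preceding the statement. First I would record that, by the very definition of $G$, one has $G(\psi,T;t)=Q(\tau_{T}\psi,T;T-t)$, so that the identity obtained there by substituting $\psi\mapsto\tau_{T}\psi$ and $t\mapsto T-t$ in \eqref{ode-02} can be rewritten as
\[
G_{j}^{i}(\psi,T;t)=\delta_{ij}+\int_{t}^{T}\sum_{k=1}^{n}G_{k}^{i}(\psi,T;s)\,q_{j}^{k}(\psi(s),s)\,\textrm{d}s
\]
for $t\in[0,T]$, where one uses $(T-s)^{+}=T-s$ for $s\in[0,T]$.

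Next I would differentiate this integral equation in $t$. The integrand $s\mapsto\sum_{k}G_{k}^{i}(\psi,T;s)q_{j}^{k}(\psi(s),s)$ is continuous on $[0,T]$ since $\psi$ is continuous, $q$ is continuous, and $G$ is $C^{1}$ as the solution of a linear ODE; hence the fundamental theorem of calculus gives
\[
\frac{\textrm{d}}{\textrm{d}t}G_{j}^{i}(\psi,T;t)=-\sum_{k=1}^{n}G_{k}^{i}(\psi,T;t)\,q_{j}^{k}(\psi(t),t),
\]
which is \eqref{back-ode-01}, while evaluating the integral equation at $t=T$ yields the terminal condition $G_{j}^{i}(\psi,T;T)=\delta_{ij}$, i.e. $G(\psi,T;T)=I$. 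Conversely, \eqref{back-ode-01} is a linear terminal-value problem with coefficients continuous in $t$, so it has a unique solution on $[0,T]$; this characterises $G(\psi,T;\cdot)$ as precisely that solution.

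Finally, for the ``moreover'' statement I would use that $\tau_{T}$ is an involution on $C([0,T];\mathbb{R}^{d})$, i.e. $\tau_{T}(\tau_{T}\psi)=\psi$. Applying the definition of $G$ to the path $\tau_{T}\psi$ and the time $T-t$ gives
\[
G(\tau_{T}\psi,T;T-t)=Q\bigl(\tau_{T}(\tau_{T}\psi),T;T-(T-t)\bigr)=Q(\psi,T;t)
\]
for $t\in[0,T]$, as claimed. I do not expect any genuine obstacle here; the only point requiring a little care is the change of variables $s\leftrightarrow T-s$ in the integral together with the identity $(T-s)^{+}=T-s$ on $[0,T]$, and this has in effect already been done in the computation just before the lemma, so the proof amounts to bookkeeping plus an appeal to uniqueness for the linear backward ODE.
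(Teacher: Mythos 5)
Your proof is correct and follows essentially the same route as the paper, which establishes the lemma precisely by the substitution $\psi\mapsto\tau_{T}\psi$, $t\mapsto T-t$ in the integral equation \eqref{ode-02} carried out in the display just before the statement, and then reads off the backward ODE. Your additional remarks on differentiating the Volterra equation, uniqueness for the linear terminal-value problem, and the involution $\tau_{T}\circ\tau_{T}=\mathrm{id}$ for the ``moreover'' part simply make explicit what the paper leaves as an elementary fact.
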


Now we are in a position to study the initial value problem of the
forward parabolic equation:
\begin{equation}
\left(L-\frac{\partial}{\partial t}\right)w^{i}(x,t)+\sum_{j=1}^{n}q_{j}^{i}(x,t)w^{j}(x,t)+f^{i}(x,t)=0\label{01-t}
\end{equation}
subject to the initial value $w^{i}(x,0)=w_{0}^{i}(x)$, where $i=1,\ldots,n$,
where $L=L_{a;b}$ (for this case we assume that $a^{ij}(x,t)$ are
uniformly continuous) or $L=\mathscr{L}_{a;b}$, for both cases, $b^{i}(x,t)$
are bounded and Borel measurable. 
\begin{lem}
\label{lem7.2}Let $\mathbb{P}^{\eta}$ denote the distribution of
the $L_{a_{T};b_{T}}$-diffusion or the $\mathscr{L}_{a_{T};b_{T}}$-diffusion,
depending on whether $L=L_{a;b}$ or $L=\mathscr{L}_{a;b}$, started
from $\eta\in\mathbb{R}^{d}$ at time $0$. Then
\begin{equation}
w^{i}(x,T)=\mathbb{P}^{x}\left[\sum_{j=1}^{n}Q_{j}^{i}(\psi,T;T)w_{0}^{j}(\psi(T))+\int_{0}^{T}Q_{j}^{i}(\psi,T;t)f^{j}(\psi(t),T-t)\textrm{d}t\right].\label{back-03-1}
\end{equation}
\end{lem}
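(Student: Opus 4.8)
The plan is to reduce \eqref{back-03-1} to the classical (backward) Feynman-Kac formula of Lemma \ref{lem5.1} by a time-reversal substitution. First I would set $v^i(x,t) = w^i(x,T-t)$ and $g^i(x,t) = f^i(x,T-t)$ and $\tilde q^i_j(x,t) = q^i_j(x,T-t)$ for $t\in[0,T]$, so that the forward equation \eqref{01-t} for $w$ on $[0,T]$ becomes a backward parabolic equation for $v$: a short chain-rule computation shows $\partial_t v^i = -\partial_s w^i(x,s)|_{s=T-t}$, and combining this with the fact that $L_{a;b}$ applied to $w(\cdot,T-t)$ is $L_{a_T;b_T}$ applied (in its time-$t$ coordinates) to $v$, equation \eqref{01-t} is equivalent to
\[
\left(L_{a_T;b_T}+\frac{\partial}{\partial t}\right)v^i(x,t)+\sum_{j=1}^n \tilde q^i_j(x,t)v^j(x,t)+g^i(x,t)=0 \quad\text{on }\mathbb{R}^d\times[0,T],
\]
with terminal data $v^i(x,T)=w^i(x,0)=w_0^i(x)$. (The inhomogeneous term $g$ is handled by Duhamel, or one simply notes Lemma \ref{lem5.1} extends to the inhomogeneous case by the standard argument.)

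Next I would identify the gauge. The ODE \eqref{gu-01} driving the Feynman-Kac weight for the operator $L_{a_T;b_T}$ with potential $\tilde q$ along a path $\psi$ is exactly $\frac{d}{dt}R^i_j(t)=R^i_k(t)\tilde q^k_j(\psi(t),t)=R^i_k(t)q^k_j(\psi(t),T-t)$, which is precisely the defining ODE \eqref{ode-01} for $Q^i_j(\psi,T;t)$ with the same initial condition $\delta_{ij}$. So the Feynman-Kac weight for the reversed problem is the functional $Q(\psi,T;\cdot)$ already constructed. Applying Lemma \ref{lem5.1} to $v$ on the interval $[0,T]$, started at $(\eta,0)$, gives
\[
v^i(\eta,0)=\mathbb{P}^\eta\!\left[\sum_{j=1}^n Q^i_j(\psi,T;T)\,v^j(\psi(T),T)+\int_0^T Q^i_j(\psi,T;t)\,g^j(\psi(t),t)\,\textrm{d}t\right],
\]
where $\mathbb{P}^\eta$ is the law of the $L_{a_T;b_T}$-diffusion from $\eta$ at time $0$. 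Unwinding the definitions $v^i(\eta,0)=w^i(\eta,T)$, $v^j(\psi(T),T)=w_0^j(\psi(T))$, and $g^j(\psi(t),t)=f^j(\psi(t),T-t)$ yields \eqref{back-03-1} with $\eta$ renamed $x$; the case $L=\mathscr{L}_{a;b}$ is identical since $\mathscr{L}_{a;b}=L_{a;\hat b}$ and the same hypotheses on $\hat b$ hold, so Lemma \ref{lem5.1} applies verbatim to the $\mathscr{L}_{a_T;b_T}$-diffusion.

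The main obstacle, and the only point requiring care, is verifying that the regularity hypotheses of Lemma \ref{lem5.1} are met by the reversed data: I need $a_T^{ij}$ uniformly continuous, $b_T^i$ bounded Borel, $\tilde q^i_j$ bounded and continuous, and I need $v^i\in C^{2,1}$ with bounded first spatial derivatives on $\mathbb{R}^d\times[0,T]$. The first three are immediate since time-reversal $t\mapsto T-t$ preserves all of these properties; the last follows from the corresponding assumption on $w^i$ in the statement of the problem (implicit in the standing smoothness assumptions on the data recorded in Section 2). One should also remark that in \eqref{01-t} the operator $L$ acts in the original time variable while in the reversed formulation it is $L_{a_T;b_T}$, and that $b_T(x,T-t)=b(x,t)$, $a_T(x,T-t)=a(x,t)$ — exactly the observation already used in the proof of Lemma \ref{lem4.3} — is what makes the two parabolic equations genuinely equivalent rather than merely formally similar. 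With that bookkeeping in place the result is a direct corollary of Lemma \ref{lem5.1}.
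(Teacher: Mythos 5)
Your proposal is correct and follows exactly the route the paper takes: the paper's proof is the single sentence ``this is the classical Feynman--Kac formula applying to $w(x,T-t)$,'' and your argument is precisely that time-reversal reduction to Lemma \ref{lem5.1}, with the coefficient identification $a_T(x,T-t)=a(x,t)$, $b_T(x,T-t)=b(x,t)$ and the gauge ODE \eqref{ode-01} spelled out. Your remark that the inhomogeneous term requires the standard Duhamel extension of Lemma \ref{lem5.1} is a fair point of care that the paper leaves implicit.
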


\begin{proof}
This is the classical Feynman-Kac formula applying to $w(x,T-t)$.
\end{proof}
\begin{thm}
\label{thm7.3}Suppose $b$ is divergence free on $\mathbb{R}^{d}$,
i.e. $\nabla\cdot b=0$ in the distribution sense. Suppose $w(x,t)=(w^{i}(x,t))_{i\leq n}$
is a solution to the parabolic system
\begin{equation}
\left(\mathscr{L}_{a;b}-\frac{\partial}{\partial t}\right)w^{i}(x,t)+\sum_{j=1}^{n}q_{j}^{i}(x,t)w^{j}(x,t)+f^{i}(x,t)=0\label{01-t-1}
\end{equation}
subject to the initial value $w(x,0)=w_{0}(x)$, where $i=1,\ldots,n$.
Then 
\begin{align}
w^{i}(x,T) & =\sum_{j=1}^{n}\int_{\mathbb{R}^{d}}\left(\int_{\varOmega}G_{j}^{i}(\psi,T;0)\mathbb{Q}_{a;-b}^{\xi,0\rightarrow x,T}(\textrm{d}\psi)\right)w_{0}^{j}(\xi)h_{a;-b}(0,\xi,T,x)\textrm{d}\xi\nonumber \\
 & +\sum_{j=1}^{n}\int_{\mathbb{R}^{d}}\left(\int_{\varOmega}\int_{0}^{T}G_{j}^{i}(\psi,T;t)f^{j}(\psi(t),t)\textrm{d}t\mathbb{Q}_{a;-b}^{\xi,0\rightarrow x,T}(\textrm{d}\psi)\right)h_{a;-b}(0,\xi,T,x)\textrm{d}\xi.\label{rep-041}
\end{align}
where $G(\psi,T;t)=G(t)$ (for every $T>0$ and every continuous path
$\psi$) is the unique solution to the ordinary differential equation:
\begin{equation}
\frac{\textrm{d}}{\textrm{d}t}G_{j}^{i}(t)=-G_{k}^{i}(t)q_{j}^{k}(\psi(t),t),\quad G_{j}^{i}(T)=\delta_{ij}.\label{back-ode51}
\end{equation}
\end{thm}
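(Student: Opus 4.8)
The plan is to combine the classical backward Feynman--Kac representation of Lemma \ref{lem7.2} with the time-reversal duality of Theorem \ref{theorem3.3}. Since $b$ is divergence free and bounded, so is $-b$, so Theorem \ref{theorem3.3} is applicable. First I would invoke Lemma \ref{lem7.2} with $L=\mathscr{L}_{a;b}$: writing $\mathbb{P}^{x}=\mathbb{Q}_{a_{T};b_{T}}^{x,0}$ for the law of the $\mathscr{L}_{a_{T};b_{T}}$-diffusion started at $x$ at time $0$,
\[
w^{i}(x,T)=\mathbb{Q}_{a_{T};b_{T}}^{x,0}\!\left[\sum_{j=1}^{n}Q_{j}^{i}(\psi,T;T)\,w_{0}^{j}(\psi(T))+\int_{0}^{T}\sum_{j=1}^{n}Q_{j}^{i}(\psi,T;t)\,f^{j}(\psi(t),T-t)\,\textrm{d}t\right].
\]
Next I would disintegrate this measure over the terminal position $\psi(T)=\xi$: since the $\mathscr{L}_{a_{T};b_{T}}$-diffusion has transition density $h_{a_{T};b_{T}}$,
\[
\mathbb{Q}_{a_{T};b_{T}}^{x,0}(\textrm{d}\psi)=\int_{\mathbb{R}^{d}}\mathbb{Q}_{a_{T};b_{T}}^{x,0\rightarrow\xi,T}(\textrm{d}\psi)\,h_{a_{T};b_{T}}(0,x,T,\xi)\,\textrm{d}\xi,
\]
which turns the expectation above into an outer integral over $\xi$ of conditional expectations under the bridge measures $\mathbb{Q}_{a_{T};b_{T}}^{x,0\rightarrow\xi,T}$, with $w_{0}^{j}(\psi(T))$ replaced by the constant $w_{0}^{j}(\xi)$.

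Then I would apply Theorem \ref{theorem3.3} in the form $\mathbb{Q}_{a_{T};b_{T}}^{x,0\rightarrow\xi,T}=\mathbb{Q}_{a;-b}^{\xi,0\rightarrow x,T}\circ\tau_{T}$, i.e. for any bounded measurable functional $F$ one has $\int_{\varOmega}F(\psi)\,\mathbb{Q}_{a_{T};b_{T}}^{x,0\rightarrow\xi,T}(\textrm{d}\psi)=\int_{\varOmega}F(\tau_{T}\psi)\,\mathbb{Q}_{a;-b}^{\xi,0\rightarrow x,T}(\textrm{d}\psi)$. The remaining work is to track the gauge functionals through the substitution $\psi\mapsto\tau_{T}\psi$, using the elementary Lemma preceding \eqref{back-ode-01}, namely $G(\psi,T;t)=Q(\tau_{T}\psi,T;T-t)$. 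Evaluation at $t=0$ gives $Q_{j}^{i}(\tau_{T}\psi,T;T)=G_{j}^{i}(\psi,T;0)$, so the initial-data term becomes $\sum_{j}G_{j}^{i}(\psi,T;0)\,w_{0}^{j}(\xi)$. For the source term, after replacing $\psi$ by $\tau_{T}\psi$ one obtains $\int_{0}^{T}Q_{j}^{i}(\tau_{T}\psi,T;t)\,f^{j}(\psi(T-t),T-t)\,\textrm{d}t$; the change of variables $s=T-t$ together with $Q_{j}^{i}(\tau_{T}\psi,T;T-s)=G_{j}^{i}(\psi,T;s)$ converts this into $\int_{0}^{T}G_{j}^{i}(\psi,T;s)\,f^{j}(\psi(s),s)\,\textrm{d}s$. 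Finally, Lemma \ref{thm5.3-1} supplies $h_{a_{T};b_{T}}(0,x,T,\xi)=h_{a;-b}(0,\xi,T,x)$, and substituting all three transformations back into the disintegrated expectation produces exactly \eqref{rep-041}.

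The Feynman--Kac input and the ODE bookkeeping are routine; the main obstacle is the measure-theoretic care needed to justify the disintegration over $\psi(T)$ and the subsequent Fubini interchanges. One must know that the bridge measures $\mathbb{Q}_{a_{T};b_{T}}^{x,0\rightarrow\xi,T}$ form a regular conditional distribution with $\xi\mapsto\mathbb{Q}_{a_{T};b_{T}}^{x,0\rightarrow\xi,T}$ measurable, that the gauge solution $(\psi,t)\mapsto Q(\psi,T;t)$ is jointly measurable and, under the boundedness of $q$, obeys a Gronwall bound $|Q(\psi,T;t)|\le e^{T\sup|q|}$ uniform in $\psi$ so that every integral above converges absolutely and Fubini applies, and that the pushforward identity of Theorem \ref{theorem3.3} can be applied functional by functional. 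Granting these standard facts, the identity follows by assembling the displayed transformations.
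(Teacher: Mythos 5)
Your proposal is correct and follows essentially the same route as the paper's own proof: Lemma \ref{lem7.2}, disintegration over the terminal point, the time-reversal duality of Theorem \ref{theorem3.3}, the gauge identity $Q(\psi,T;t)=G(\tau_{T}\psi,T;T-t)$, and the density symmetry of Lemma \ref{thm5.3-1}, assembled in the same order. The measure-theoretic caveats you flag (measurability of the bridge kernel, the Gronwall bound on the gauge, Fubini) are left implicit in the paper as well, so nothing is missing relative to the original argument.
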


\begin{proof}
According to the previous Lemma \ref{lem7.2}
\begin{align*}
w^{i}(x,T) & =\sum_{j=1}^{n}\mathbb{Q}^{x,0}\left[Q_{j}^{i}(\psi,T;T)w_{0}^{j}(\psi(T))\right]+\int_{0}^{T}\mathbb{Q}^{x,0}\left[Q_{j}^{i}(\psi,T;t)f^{j}(\psi(t),T-t)\right]\textrm{d}t\\
 & =\sum_{j=1}^{n}\int_{\mathbb{R}^{d}}\mathbb{Q}^{x,0}\left[\left.Q_{j}^{i}(\psi,T;T)w_{0}^{j}(\psi(T))\right|\psi(T)=\xi\right]\mathbb{Q}^{x,0}\left[\psi(T)\in d\xi\right]\\
 & +\int_{0}^{T}\mathbb{Q}^{x,0}\left[\left.Q_{j}^{i}(\psi,T;t)f^{j}(\psi(t),T-t)\right|\psi(T)=\xi\right]\mathbb{Q}^{x,0}\left[\psi(T)\in d\xi\right]\textrm{d}t\\
 & =\sum_{j=1}^{n}\int_{\mathbb{R}^{d}}\mathbb{Q}_{a_{T};b_{T}}^{x,0\rightarrow\xi,T}\left[Q_{j}^{i}(\psi,T;T)\right]w_{0}^{j}(\xi)h_{a_{T};b_{T}}(0,x,T,\xi)\textrm{d}\xi\\
 & +\sum_{j=1}^{n}\int_{0}^{T}\mathbb{Q}_{a_{T};b_{T}}^{x,0\rightarrow\xi,T}\left[Q_{j}^{i}(\psi,T;t)f^{j}(\psi(t),T-t)\right]h_{a_{T};b_{T}}(0,x,T,\xi)d\xi\textrm{d}t.
\end{align*}
Since $b(x,t)$ is divergence-free, so that $\mathscr{L}_{a;b}^{\star}=\mathscr{L}_{a;-b}$.
By Theorem \ref{theorem3.3}
\[
\mathbb{Q}_{a_{T};b_{T}}^{\xi,0\rightarrow\eta,T}\circ\tau_{T}=\mathbb{Q}_{a;-b}^{\eta,0\rightarrow\xi,T},
\]
together with the relation that $Q(\psi,T;t)=G(\tau_{T}\psi,T;T-t)$,
we obtain
\begin{align*}
w^{i}(x,T) & =\sum_{j=1}^{n}\int_{\mathbb{R}^{d}}\mathbb{Q}_{a_{T};b_{T}}^{x,0\rightarrow\xi,T}\left[G_{j}^{i}(\tau_{T}\psi,T;0)\right]w_{0}^{j}(\xi)h_{a_{T};b_{T}}(0,x,T,\xi)d\xi\\
 & +\int_{0}^{T}\mathbb{Q}_{a_{T};b_{T}}^{x,0\rightarrow\xi,T}\left[G_{j}^{i}(\tau_{T}\psi,T;T-t)f^{j}(\tau_{T}\psi(T-t),T-t)\right]h_{a_{T};b_{T}}(0,x,T,\xi)d\xi\textrm{d}t\\
 & =\sum_{j=1}^{n}\int_{\mathbb{R}^{d}}\mathbb{Q}_{a;-b}^{\xi,0\rightarrow x,T}\left[G_{j}^{i}(\psi,T;0)\right]w_{0}^{j}(\xi)h_{a_{T};b_{T}}(0,x,T,\xi)d\xi\\
 & +\sum_{j=1}^{n}\mathbb{Q}_{a;-b}^{\xi,0\rightarrow x,T}\left[\int_{0}^{T}G_{j}^{i}(\psi,T;T-t)f^{j}(\psi(T-t),T-t)\textrm{d}t\right]h_{a_{T};b_{T}}(0,x,T,\xi)d\xi.
\end{align*}
Finally according to Lemma \ref{thm5.3-1}
\[
h_{a_{T};b_{T}}(T-t,x,T-\tau,\xi)=h_{a;-b}(\tau,\xi,t,x),
\]
which yields that $h_{a_{T};b_{T}}(0,x,T,\xi)=h_{a;-b}(0,\xi,T,x)$
for $0\leq\tau<t\leq T$. Substituting this equation into the representation
for $w^{i}$ we obtain \eqref{rep-041}.
\end{proof}

\section{Boundary value problems}

We first recall the Feynman-Kac formula for solutions of the initial
and boundary value problem:
\begin{equation}
\left(L_{a;b}+\frac{\partial}{\partial t}\right)f^{j}(x,t)+q_{k}^{j}(x,t)f^{k}(x,t)=g^{j}(x,t)\quad\textrm{ in }D\times[0,\infty)\label{D-01}
\end{equation}
subject to the Dirichlet boundary condition that 
\begin{equation}
f^{j}(x,t)=\beta^{j}(x)\quad\textrm{ for }x\in\partial D\textrm{ and }t>0,\label{D-02}
\end{equation}
where $j=1,\cdots,n$. 
\begin{lem}
\label{lem7.1} Assume that $a$ is uniformly continuous and $b$
is bounded and Borel measurable. For each $\psi\in C([0,\infty);\mathbb{R}^{d})$
denote $Q(\psi,t)$ the solution to the (linear) ordinary differential
equation
\begin{equation}
\frac{\textrm{d}}{\textrm{d}t}Q_{j}^{i}(t)=Q_{k}^{i}(t)1_{D}(\psi(t))q_{j}^{k}(\psi(t),t),\quad Q_{j}^{i}(0)=\delta_{j}^{i}\label{pr-Q-eq-01}
\end{equation}
and 
\begin{equation}
\zeta_{D}(\psi)=\inf\{t\geq0:\psi(t)\notin D\}.\label{rep-0r1}
\end{equation}
Then the following integration representation holds:
\begin{align}
f^{i}(\eta,0) & =\int_{\varOmega}Q_{j}^{i}(\psi,t)f^{j}(\psi(t),t)1_{\left\{ \zeta_{D}(\psi)>t\right\} }\mathbb{P}^{\eta}(\textrm{d}\psi)\nonumber \\
 & +\int_{\varOmega}Q_{j}^{i}(\psi,\zeta_{D}(\psi))\beta^{j}(\psi(\zeta_{D}(\psi)))1_{\left\{ \zeta_{D}(\psi)\leq t\right\} }\mathbb{P}^{\eta}(\textrm{d}\psi)\\
 & -\int_{\varOmega}\left[\int_{0}^{t\wedge\zeta_{D}(\psi)}Q_{j}^{i}(\psi,s)g^{j}(\psi(s),s)\textrm{d}s\right]\mathbb{P}^{\eta}(\textrm{d}\psi)\label{fey-f-02}
\end{align}
for all $t\geq0$ and $\eta\in\mathbb{R}^{d}$, where $i=1,\cdots,n$
and $\mathbb{P}^{\eta}=\mathbb{P}_{a;b}^{\eta,0}$ for simplicity.
\end{lem}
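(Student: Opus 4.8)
The plan is to obtain \eqref{fey-f-02} by the classical It\^o/Dynkin argument behind the Feynman--Kac formula: apply It\^o's formula to $s\mapsto Q_j^i(\psi,s)f^j(\psi(s),s)$ along the $L_{a;b}$-diffusion, stop it at the first exit time $\zeta_D$, and take $\mathbb{P}^\eta$-expectations. One works, as in Lemma \ref{lem5.1}, with a $C^{2,1}$ solution $f=(f^j)$ of \eqref{D-01}--\eqref{D-02} that extends continuously to $\overline{D}\times(0,\infty)$ and has bounded first spatial derivatives; since $\partial D$ is smooth one may moreover extend each $f^j$ to a $C^{2,1}$ function on $\mathbb{R}^d\times[0,\infty)$ that agrees with the original along any path stopped at $\zeta_D$, so that the local martingale property of the $L_{a;b}$-diffusion is applicable. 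Two elementary facts about continuous paths will be used: $\zeta_D$ is a stopping time (the first hitting time of the closed set $\mathbb{R}^d\setminus D$), and on $\{\zeta_D(\psi)<\infty\}$ one has $\psi(s)\in D$ for every $s<\zeta_D(\psi)$ while $\psi(\zeta_D(\psi))\in\partial D$, so that the Dirichlet condition \eqref{D-02} may be applied at the exit position.

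First I would compute the differential of $Q_j^i(\psi,s)f^j(\psi(s),s)$ on $s\in[0,t\wedge\zeta_D(\psi)]$. On that interval $\psi(s)\in D$, hence $1_D(\psi(s))=1$ and \eqref{pr-Q-eq-01} reads $\tfrac{\mathrm{d}}{\mathrm{d}s}Q_j^i(\psi,s)=Q_k^i(\psi,s)q_j^k(\psi(s),s)$; by the local martingale property, $\mathrm{d}f^j(\psi(s),s)=\big(L_{a;b}+\partial_s\big)f^j(\psi(s),s)\,\mathrm{d}s+\mathrm{d}M_s^{[f^j]}$ with $M^{[f^j]}$ a local $\mathbb{P}^\eta$-martingale; and since $\psi(s)\in D$, \eqref{D-01} lets us replace $\big(L_{a;b}+\partial_s\big)f^j$ by $-q_k^jf^k+g^j$. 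Substituting these into the product rule (no bracket term, since $s\mapsto Q_j^i(\psi,s)$ has finite variation), the two potential contributions — $Q_k^iq_j^kf^j$, with a plus sign from $\mathrm{d}Q$, and $Q_j^iq_k^jf^k$, with a minus sign from $\mathrm{d}f$ — coincide after relabelling the summed indices and therefore cancel, leaving
\[
\mathrm{d}\big(Q_j^i(\psi,s)f^j(\psi(s),s)\big)=Q_j^i(\psi,s)g^j(\psi(s),s)\,\mathrm{d}s+Q_j^i(\psi,s)\,\mathrm{d}M_s^{[f^j]}.
\]

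Next I would integrate this identity from $0$ to $t\wedge\zeta_D(\psi)$, use $Q_j^i(\psi,0)=\delta_j^i$, and take the $\mathbb{P}^\eta$-expectation. The process $\int_0^{\cdot}Q_j^i(\psi,s)\,\mathrm{d}M_s^{[f^j]}$ is a true martingale on $[0,t]$ — its integrand is bounded there since $q$ is bounded, and the quadratic variation of $M^{[f^j]}$ is controlled by the bounded $a$ and by $\nabla f$ — so optional stopping at the bounded time $t\wedge\zeta_D$ kills its expectation, giving
\[
f^i(\eta,0)=\mathbb{P}^\eta\!\left[Q_j^i(\psi,t\wedge\zeta_D)f^j(\psi(t\wedge\zeta_D),t\wedge\zeta_D)\right]-\mathbb{P}^\eta\!\left[\int_0^{t\wedge\zeta_D}Q_j^i(\psi,s)g^j(\psi(s),s)\,\mathrm{d}s\right].
\]
Then I would split the first expectation according to whether $\zeta_D(\psi)>t$ or $\zeta_D(\psi)\le t$: on $\{\zeta_D>t\}$ one has $t\wedge\zeta_D=t$ and $\psi(t)\in D$, giving $\mathbb{P}^\eta[Q_j^i(\psi,t)f^j(\psi(t),t)1_{\{\zeta_D>t\}}]$; on $\{\zeta_D\le t\}$ one has $t\wedge\zeta_D=\zeta_D$ and $\psi(\zeta_D)\in\partial D$, so \eqref{D-02} yields $\mathbb{P}^\eta[Q_j^i(\psi,\zeta_D)\beta^j(\psi(\zeta_D))1_{\{\zeta_D\le t\}}]$. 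Since the $g$-term is already in the form appearing in \eqref{fey-f-02}, rearranging produces \eqref{fey-f-02}.

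I expect the only genuinely delicate point to be the optional-stopping/integrability step — verifying that $\int_0^{\cdot}Q_j^i(\psi,s)\,\mathrm{d}M_s^{[f^j]}$ is a true martingale up to the random time $t\wedge\zeta_D$ — which is exactly where the boundedness hypotheses on $a,b,q$ and on $\nabla f$ are consumed; the index cancellation, the $C^{2,1}$ extension of $f$ off $\overline{D}$, and the identification $\psi(\zeta_D)\in\partial D$ (the only place the smoothness of $\partial D$ enters) are routine. Indeed, the statement may instead be quoted as a standard form of the Feynman--Kac formula for Dirichlet problems; the sketch above is included chiefly to pin down the precise gauge functional $Q$ occurring in \eqref{fey-f-02}.
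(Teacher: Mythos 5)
Your proposal is correct and follows essentially the same route as the paper's own proof: extend $f$ to a $C^{2,1}$ function on $\mathbb{R}^{d}\times[0,\infty)$ (the paper does this with a cutoff $\rho$ equal to $1$ on $\overline{D}$), apply It\^o's formula to $Q_{j}^{i}(s)f^{j}(X(s),s)$, stop at $t\wedge\zeta_{D}$, kill the stochastic integral by taking expectations, and split the terminal expectation over $\{\zeta_{D}>t\}$ and $\{\zeta_{D}\leq t\}$ using the Dirichlet data. The cancellation of the zero-order terms and the handling of the $g$-term match the paper's computation exactly.
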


\begin{proof}
For a slightly different version of Feynman-Kac formula and its proof,
see for example \citep[Theorem 2.3. page133]{Freidlin1985}. For completeness
we include a proof here. We may assume that $q_{j}^{i}(x,t)=0$ for
$x\notin D$ otherwise we may use $1_{D}(x)q_{j}^{i}(x,t)$ instead.
Let $X$ be the weak solution of the stochastic differential equation
\[
\textrm{d}X^{k}(t)=b^{k}(X(t),t)\textrm{d}t+\sqrt{2\nu}\sigma_{l}^{k}(X(t),t)\textrm{d}B^{l}(t).
\]
for $k=1,\ldots,d$, with initial $X(0)=\eta$. Consider the linear
ordinary differential equation:
\begin{equation}
\frac{\textrm{d}}{\textrm{d}t}Q_{j}^{i}(t)=Q_{k}^{i}(t)q_{j}^{k}(X(t),t),\quad Q_{j}^{i}(0)=\delta_{j}^{i}.\label{Q-bd-01}
\end{equation}
Suppose $\rho$ is smooth such that $\rho(x)=1$ for $x\in\overline{D}$,
and $\tilde{f}^{i}(x,t)=\rho(x)f^{i}(x,t)$. Then $\tilde{f}^{i}(x,t)$
($i=1,\ldots,n$) are $C^{2,1}$-functions on $\mathbb{R}^{d}\times[0,\infty)$
and
\begin{equation}
\frac{\partial}{\partial t}\tilde{f}^{j}+L_{a;b}\tilde{f}^{j}+q_{k}^{j}\tilde{f}^{k}=F^{j}\quad\textrm{ in }\mathbb{R}^{d}\times[0,\infty).\label{par-q-5}
\end{equation}
 Consider $M_{t}^{i}=Q_{j}^{i}(t)f^{j}(X(t),t)$ for $t\geq0$. Then,
according to It\^o's formula
\begin{align}
M_{t}^{i} & =M_{0}^{i}+\int_{0}^{t}Q_{j}^{i}(s)\sqrt{2\nu}\sigma_{l}^{k}(X(s),s)\frac{\partial\tilde{f}^{j}}{\partial x^{k}}(X(s),s)\textrm{d}B_{s}^{l}\nonumber \\
 & +\int_{0}^{t}Q_{j}^{i}(s)\left(\frac{\partial}{\partial s}\tilde{f}^{j}+L_{a;b}\tilde{f}^{j}+q_{k}^{j}\tilde{f}^{k}\right)(X(s),s)\textrm{d}s.\label{mart-bQ-04}
\end{align}
Let 
\[
\zeta_{D}=\inf\{t\geq0:X(t)\notin D\}
\]
be the first time the diffusion leaves the region $D$. Then
\begin{equation}
\mathbb{E}\left[M_{t\wedge\zeta_{D}}^{i}\right]=\mathbb{E}\left[M_{0}^{i}\right]+\mathbb{E}\int_{0}^{t\wedge\zeta_{D}}Q_{j}^{i}(s)\left(\frac{\partial}{\partial s}f^{j}+L_{a;b}f^{j}+q_{k}^{j}f^{k}\right)(X(s),s)\textrm{d}s.\label{eq:mart-E-01}
\end{equation}
Since $f^{j}$ solve the differential equations \eqref{par-q-5},
so that 
\begin{equation}
\mathbb{E}\left[M_{t\wedge\zeta_{D}}^{i}\right]=\mathbb{E}\left[M_{0}^{i}\right]+\mathbb{E}\left[\int_{0}^{t\wedge\zeta_{D}}Q_{j}^{i}(s)F^{j}(X(s),s)\textrm{d}s\right].\label{mart-E-2}
\end{equation}
Since $M_{0}^{i}=f^{i}(\eta,0)$ and 
\begin{align*}
\mathbb{E}\left[M_{t\wedge\zeta_{D}}^{i}\right] & =\mathbb{E}\left[M_{\zeta_{D}}^{i}:t\geq\zeta_{D}\right]+\mathbb{E}\left[M_{t}^{i}:t<\zeta_{D}\right]\\
 & =\mathbb{E}\left[Q_{j}^{i}(\zeta_{D})f^{j}(X(\zeta_{D}),\zeta_{D}):t\geq\zeta_{D}\right]\\
 & +\mathbb{E}\left[Q_{j}^{i}(t)f^{j}(X(t),t):t<\zeta_{D}\right]\\
 & =\mathbb{E}\left[Q_{j}^{i}(t)f^{j}(X(t),t):t<\zeta_{D}\right]\\
 & +\mathbb{E}\left[Q_{j}^{i}(\zeta_{D})\beta^{j}(X(\zeta_{D})):t\geq\zeta_{D}\right]
\end{align*}
where the last equality follows from the Dirichlet boundary condition:
$X(\zeta_{D})\in\partial D$ on $\zeta_{D}<\infty$, and $f^{j}(x,t)=\beta^{j}(x)$
for $x\in\partial D$. Substituting this equality into \eqref{mart-E-2},
\begin{align*}
\mathbb{E}\left[Q_{j}^{i}(t)f^{j}(X(t),t):t<\zeta_{D}\right] & =f^{i}(\eta,0)+\mathbb{E}\left[\int_{0}^{t\wedge\zeta_{D}}Q_{j}^{i}(s)F^{j}(X(s),s)\textrm{d}s\right]\\
 & -\mathbb{E}\left[Q_{j}^{i}(\zeta_{D})\beta^{j}(X(\zeta_{D})):t\geq\zeta_{D}\right]
\end{align*}
The functional integration representation follows by an approximating
procedure. 
\end{proof}
We next establish a forward Feynman-Kac formula. 

For every $\psi\in C([0,\infty);\mathbb{R}^{d})$ and $T>0$, $\tilde{Q}(\psi,T;t)$
denotes the solution to the following linear ordinary differential
equations
\begin{equation}
\frac{\textrm{d}}{\textrm{d}t}\tilde{Q}_{j}^{i}(\psi,T;t)=-\tilde{Q}_{k}^{i}(\psi,T;t)1_{D}(\psi(t))q_{j}^{k}(\psi(t),t),\quad\tilde{Q}_{j}^{i}(\psi,T;T)=\delta_{j}^{i}\label{t-Q-01}
\end{equation}
for $i,j=1,\cdots,n$.
\begin{thm}
\label{thm7.2} Suppose $b(x,t)$ is bounded, Borel measurable and
$\nabla\cdot b=0$ in the distribution sense on $\mathbb{R}^{d}$.
Let $w(x,t)$ be the solution to Cauchy's initial and boundary problem
of the following parabolic system:
\begin{equation}
\left(\mathscr{L}_{a;b}-\frac{\partial}{\partial t}\right)w^{j}(x,t)+\sum_{k=1}^{n}q_{k}^{j}(x,t)w^{k}(x,t)=g^{j}(x,t)\quad\textrm{ in }D\label{sy-01}
\end{equation}
subject to the initial and boundary conditions that
\begin{equation}
w^{j}(x,0)=w_{0}^{j}(x)\textrm{ for }x\in D,\textrm{ and }w^{j}(x,t)=\beta^{j}(x)\textrm{ for }x\in\partial D\textrm{, }t>0\label{int-bd-c3}
\end{equation}
for $j=1,\cdots,n$. Then
\begin{align}
w^{i}(\eta,T) & =\int_{D}\left(\int_{\varOmega}\tilde{Q}_{j}^{i}(\psi,T;0)1_{\left\{ \zeta_{D}(\psi)>T\right\} }\mathbb{Q}_{a;-b}^{\xi,0\rightarrow\eta,T}(\textrm{d}\psi)\right)w_{0}^{j}(\xi)h(0,\xi,T,\eta)\textrm{d}\xi\nonumber \\
 & +\int_{\mathbb{R}^{d}}\left(\int_{\varOmega}\tilde{Q}_{j}^{i}(\psi,\lambda_{T,\partial D}(\psi))\beta^{j}(\psi(\lambda_{T,\partial D}(\psi)))1_{\left\{ \zeta_{D}(\psi)\leq T\right\} }\mathbb{Q}_{a;-b}^{\xi,0\rightarrow\eta,T}(\textrm{d}\psi)\right)h(0,\xi,T,\eta)\textrm{d}\xi\\
 & -\int_{0}^{T}\left[\int_{\mathbb{R}^{d}}\left(\int_{\varOmega}\tilde{Q}_{j}^{i}(\psi,T;s)g^{j}(\psi(s),s)1_{\left\{ \zeta_{D}(\theta_{s}\psi)>T-s\right\} }\mathbb{Q}_{a;-b}^{\xi,0\rightarrow\eta,T}(\textrm{d}\psi)\right)h(0,\xi,T,\eta)\textrm{d}\xi\right]\textrm{d}s\label{fey-mg1}
\end{align}
 for every $\eta\in D$ and $T>0$, where $h(\tau,\xi,t,\eta)$ denotes
$h_{a;-b}(\tau,\xi,t,\eta)$ for simplicity,
\[
\lambda_{T,\partial D}(\psi)=\sup\left\{ t\in[0,T]:\psi(t)\in\partial D\right\} 
\]
and 
\[
\theta_{s}:\varOmega\rightarrow\varOmega,\quad\theta_{s}\psi(t)=\psi(t+s)\;\textrm{ for }s;t\geq0.
\]
\end{thm}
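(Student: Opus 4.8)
The plan is to reduce the statement to the backward Feynman--Kac formula of Lemma~\ref{lem7.1} by a time reversal, exactly as Theorem~\ref{thm7.3} was deduced from Lemma~\ref{lem7.2}. Fix $T>0$ and put $v^{j}(x,t)=w^{j}(x,T-t)$ for $t\in[0,T]$. Since $a_{T}(x,t)=a(x,T-t)$ and $b_{T}(x,t)=b(x,T-t)$, a direct computation transforms \eqref{sy-01} into the backward parabolic system $\bigl(\mathscr{L}_{a_{T};b_{T}}+\tfrac{\partial}{\partial t}\bigr)v^{j}+\sum_{k}q_{T,k}^{j}v^{k}=g_{T}^{j}$ on $D\times[0,T]$, where $q_{T}(x,t)=q(x,T-t)$ and $g_{T}(x,t)=g(x,T-t)$; the initial condition becomes the terminal condition $v^{j}(\cdot,T)=w_{0}^{j}$ on $D$, and, since $\beta^{j}$ is independent of $t$, the boundary condition \eqref{int-bd-c3} reads $v^{j}(x,t)=\beta^{j}(x)$ for $x\in\partial D$ and $t<T$. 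As $\mathscr{L}_{a_{T};b_{T}}$ is of the form $L_{a_{T};b'}$ with $b'$ smooth under our standing assumptions, Lemma~\ref{lem7.1} applies to $v$ with horizon $t=T$ and with $\mathbb{P}^{\eta}=\mathbb{Q}_{a_{T};b_{T}}^{\eta,0}$ the $\mathscr{L}_{a_{T};b_{T}}$-diffusion started at $\eta$ at time $0$. Writing $\bar{Q}(\psi,\cdot)$ for the solution of the gauge ODE \eqref{pr-Q-eq-01} with $q_{T}$ in place of $q$, this expresses $w^{i}(\eta,T)=v^{i}(\eta,0)$ as the sum of a surviving term $\mathbb{P}^{\eta}\bigl[\bar{Q}_{j}^{i}(\psi,T)\,w_{0}^{j}(\psi(T))\,1_{\{\zeta_{D}(\psi)>T\}}\bigr]$, a boundary term $\mathbb{P}^{\eta}\bigl[\bar{Q}_{j}^{i}(\psi,\zeta_{D}(\psi))\,\beta^{j}(\psi(\zeta_{D}(\psi)))\,1_{\{\zeta_{D}(\psi)\le T\}}\bigr]$, and a source term $-\mathbb{P}^{\eta}\bigl[\int_{0}^{T\wedge\zeta_{D}(\psi)}\bar{Q}_{j}^{i}(\psi,s)\,g_{T}^{j}(\psi(s),s)\,\textrm{d}s\bigr]$.

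Next I would disintegrate each expectation over the terminal value, $\mathbb{P}^{\eta}[\,\cdot\,]=\int_{\mathbb{R}^{d}}\mathbb{Q}_{a_{T};b_{T}}^{\eta,0\rightarrow\xi,T}[\,\cdot\,]\,h_{a_{T};b_{T}}(0,\eta,T,\xi)\,\textrm{d}\xi$, and transport everything through the time-reversal map $\tau_{T}$. By Theorem~\ref{theorem3.3}, $\mathbb{Q}_{a;-b}^{\xi,0\rightarrow\eta,T}\circ\tau_{T}=\mathbb{Q}_{a_{T};b_{T}}^{\eta,0\rightarrow\xi,T}$, so $\mathbb{Q}_{a_{T};b_{T}}^{\eta,0\rightarrow\xi,T}[F(\psi)]=\mathbb{Q}_{a;-b}^{\xi,0\rightarrow\eta,T}[F(\tau_{T}\psi)]$ for any path functional $F$; and by Lemma~\ref{thm5.3-1} with $\tau=0$ and $t=T$, $h_{a_{T};b_{T}}(0,\eta,T,\xi)=h_{a;-b}(0,\xi,T,\eta)=h(0,\xi,T,\eta)$. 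The gauge functionals transform by the identity $\bar{Q}(\psi,t)=\tilde{Q}(\tau_{T}\psi,T;T-t)$, which follows from uniqueness for linear ODEs by the same computation relating the gauge equations \eqref{ode-01} and \eqref{back-ode-01}, now carried out with $1_{D}(\psi(t))\,q_{j}^{k}(\psi(t),t)$ in the role of $q_{j}^{k}$. Substituting these, and using that $\psi(T)=\xi$ holds $\mathbb{Q}_{a;-b}^{\xi,0\rightarrow\eta,T}$-almost surely to extract the factor $w_{0}^{j}(\xi)$ from the first term, recasts each of the three terms in the form $\int\bigl(\int_{\varOmega}\cdots\,\mathbb{Q}_{a;-b}^{\xi,0\rightarrow\eta,T}(\textrm{d}\psi)\bigr)h(0,\xi,T,\eta)\,\textrm{d}\xi$.

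It then remains to identify the path functionals surviving the reversal. The events $\{\zeta_{D}(\psi)>T\}$ and $\{\zeta_{D}(\psi)\le T\}$ depend only on the image $\psi([0,T])$, hence are $\tau_{T}$-invariant, so the surviving term acquires the factor $\tilde{Q}_{j}^{i}(\psi,T;0)\,1_{\{\zeta_{D}(\psi)>T\}}$. For the boundary term, because $\eta\in D$ forces the reversed bridge to remain in $D$ on an interval ending at $T$, continuity of paths gives $\sup\{s\in[0,T]:\psi(s)\notin D\}=\lambda_{T,\partial D}(\psi)$ on $\{\zeta_{D}(\psi)\le T\}$, whence $\zeta_{D}(\tau_{T}\psi)=T-\lambda_{T,\partial D}(\psi)$ and $(\tau_{T}\psi)(\zeta_{D}(\tau_{T}\psi))=\psi(\lambda_{T,\partial D}(\psi))$; combined with $\bar{Q}(\psi,t)=\tilde{Q}(\tau_{T}\psi,T;T-t)$ this yields the factor $\tilde{Q}_{j}^{i}(\psi,\lambda_{T,\partial D}(\psi))\,\beta^{j}(\psi(\lambda_{T,\partial D}(\psi)))\,1_{\{\zeta_{D}(\psi)\le T\}}$. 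For the source term, the change of variable $s\mapsto T-s$ in the inner integral, together with the same identities and the observation $\{\psi(u)\in D\text{ for all }u\in[s,T]\}=\{\zeta_{D}(\theta_{s}\psi)>T-s\}$, turns it into $-\int_{0}^{T}\tilde{Q}_{j}^{i}(\psi,T;s)\,g^{j}(\psi(s),s)\,1_{\{\zeta_{D}(\theta_{s}\psi)>T-s\}}\,\textrm{d}s$. Collecting the three contributions, multiplying by $h(0,\xi,T,\eta)$, integrating in $\xi$ and using Fubini to take the $\textrm{d}s$-integral outside gives \eqref{fey-mg1}. The main obstacle is exactly this last bookkeeping: correctly matching $\zeta_{D}\circ\tau_{T}$ with $T-\lambda_{T,\partial D}$ and tracking the action of $\tau_{T}$ on $\bar{Q}$ and on the various indicator functionals; the reduction itself and the applications of Theorem~\ref{theorem3.3} and Lemma~\ref{thm5.3-1} are routine once Theorem~\ref{thm7.3} is available. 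A minor technical point, dealt with by the same approximation as in Lemma~\ref{lem7.1}, is that $v$ need not be smooth up to $\partial D$.
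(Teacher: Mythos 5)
Your proposal is correct and follows essentially the same route as the paper's own proof: reduce to the backward Feynman--Kac formula of Lemma~\ref{lem7.1} via $w^{i}(x,T-t)$, disintegrate over the terminal value, apply the bridge duality of Theorem~\ref{theorem3.3} together with Lemma~\ref{thm5.3-1}, and then carry out the same bookkeeping identifying $\zeta_{D}\circ\tau_{T}$ with $T-\lambda_{T,\partial D}$ and $\bar{Q}(\psi,t)$ with $\tilde{Q}(\tau_{T}\psi,T;T-t)$. The three-term decomposition and the treatment of each term match the paper's $R_{I}$, $R_{B}$, $R_{N}$ argument step for step.
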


\begin{proof}
Let $T>0$ and $f^{i}(x,t)=w^{i}(x,T-t)$, so that $u^{i}$ satisfy
the following parabolic equations
\[
\left(\mathscr{L}_{a_{T};b_{T}}+\frac{\partial}{\partial t}\right)f^{j}(x,t)+\sum_{k=1}^{n}q_{k,T}^{j}(x,t)f^{k}(x,t)=g_{T}^{j}(x,t)\quad\textrm{ in }D.
\]
Then according to \eqref{fey-f-02}
\begin{align}
w^{i}(\eta,T) & =\int_{\varOmega}\left(Q_{j}^{i}(\psi,T)w_{0}^{j}(\psi(T))1_{\left\{ \zeta_{D}(\psi)>T\right\} }\right)\mathbb{Q}_{a_{T};b_{T}}^{\eta}(\textrm{d}\psi)\nonumber \\
 & +\int_{\varOmega}Q_{j}^{i}(\psi,\zeta_{D}(\psi))\beta^{j}(\psi(\zeta_{D}(\psi)))1_{\left\{ \zeta_{D}(\psi)\leq T\right\} }\mathbb{Q}_{a_{T};b_{T}}^{\eta}(\textrm{d}\psi)\\
 & -\int_{0}^{T}\left[\int_{\varOmega}\left(Q_{j}^{i}(\psi,s)g^{j}(\psi(s),T-s)1_{\left\{ s<T\wedge\zeta_{D}(\psi)\right\} }\right)\mathbb{Q}_{a_{T};b_{T}}^{\eta}(\textrm{d}\psi)\right]\textrm{d}s\label{fey-m}
\end{align}
where $\mathbb{Q}_{a_{T};b_{T}}^{\eta}$ is the law of the $\mathscr{L}_{a_{T};b_{T}}$-diffusion
started from $\eta$ at instance $0$, and 
\[
Q_{j}^{i}(\psi,t)=\delta_{j}^{i}+\int_{0}^{t}Q_{k}^{i}(\psi,s)1_{D}(\psi(s))q_{j}^{k}(\psi(s),T-s)\textrm{d}s.
\]
Replace $t$ by $T-t$ and $\psi$ by $\tau_{T}\psi$ one obtains
that
\begin{align*}
Q_{j}^{i}(\tau_{T}\psi,T-t) & =\delta_{j}^{i}+\int_{0}^{T-t}Q_{k}^{i}(\tau_{T}\psi,s)1_{D}(\psi(T-s))q_{j}^{k}(\psi(T-s),T-s)\textrm{d}s\\
 & =\delta_{j}^{i}-\int_{T}^{t}Q_{k}^{i}(\tau_{T}\psi,T-s)1_{D}(\psi(s))q_{j}^{k}(\psi(s),s)\textrm{d}s.
\end{align*}
Hence, by setting $\tilde{Q}(\psi,T;t)=Q_{j}^{i}(\tau_{T}\psi,T-t)$,
one deduce that
\begin{equation}
\tilde{Q}_{j}^{i}(\psi,T;t)=\delta_{j}^{i}-\int_{T}^{t}\tilde{Q}_{k}^{i}(\psi,T;s)1_{D}(\psi(s))q_{j}^{k}(\psi(s),s)\textrm{d}s.\label{re-flow-01}
\end{equation}
Moreover $Q_{j}^{i}(\tau_{T}\psi,t)=\tilde{Q}(\psi,T;T-t)$ for every
$t\in[0,T]$. 

We rewrite \eqref{fey-m} by conditioning on the values of the diffusion
at $T$, to obtain that
\begin{equation}
w^{i}(\eta,T)=R_{I}^{i}(\eta,T)+R_{B}^{i}(\eta,T)-R_{N}^{i}(\eta,T)\label{Rep-cas-01}
\end{equation}
where the first term
\[
R_{I}^{i}(\eta,T)=\int_{D}\left(\int_{\varOmega}Q_{j}^{i}(\psi,T)1_{\left\{ \zeta_{D}(\psi)>T\right\} }\mathbb{Q}_{a_{T};b_{T}}^{\eta,0\rightarrow\xi,T}(\textrm{d}\psi)\right)w_{0}^{j}(\xi)h_{a_{T};b_{T}}(0,\eta,T,\xi)\textrm{d}\xi,
\]
the boundary term
\[
R_{B}^{i}(\eta,T)=\int_{\varOmega}Q_{j}^{i}(\psi,\zeta_{D}(\psi))\beta^{j}(\psi(\zeta_{D}(\psi)))1_{\left\{ \zeta_{D}(\psi)\leq T\right\} }\mathbb{Q}_{a_{T};b_{T}}^{\eta}(\textrm{d}\psi)
\]
and finally the inhomogeneous term
\[
R_{N}^{i}(\eta,T)=\int_{0}^{T}\left[\int_{\mathbb{R}^{d}}\left(\int_{\varOmega}Q_{j}^{i}(\psi,s)g^{j}(\psi(s),T-s)1_{\left\{ s<\zeta_{D}(\psi)\right\} }\mathbb{Q}_{a_{T};b_{T}}^{\eta,0\rightarrow\xi,T}(\textrm{d}\psi)\right)h_{a_{T};b_{T}}(0,\eta,T,\xi)\textrm{d}\xi\right]\textrm{d}s.
\]

Since $\mathscr{L}_{a;b}^{\star}=\mathscr{L}_{a;-b}$ under our assumptions,
$\mathbb{Q}_{a;-b}^{\xi,0\rightarrow\eta,T}\circ\tau_{T}=\mathbb{Q}_{a_{T};b_{T}}^{\eta,0\rightarrow\xi,T}$
and (cf. Lemma \ref{thm5.3-1})
\begin{equation}
h_{a_{T};b_{T}}(0,\eta,T,\xi)=h(0,\xi,T,\eta).\label{nb-01}
\end{equation}
Thanks to these dualities, we are able to rewrite the three terms
on the right-hand side of the representation \eqref{Rep-cas-01} for
$w^{i}$. Indeed the first term
\begin{align*}
R_{I}^{i}(\eta,T) & =\int_{D}\left(\int_{\varOmega}Q_{j}^{i}(\psi,T)1_{\left\{ \zeta_{D}(\psi)>T\right\} }\mathbb{Q}_{a_{T};b_{T}}^{\eta,0\rightarrow\xi,T}(\textrm{d}\psi)\right)w_{0}^{j}(\xi)h_{a_{T};b_{T}}(0,\eta,T,\xi)\textrm{d}\xi\\
 & =\int_{D}\left(\int_{\varOmega}\tilde{Q}_{j}^{i}(\psi,T;0)1_{\left\{ \zeta_{D}(\tau_{T}\psi)>T\right\} }\mathbb{Q}_{a;-b}^{\xi,0\rightarrow\eta,T}(\textrm{d}\psi)\right)w_{0}^{j}(\xi)h(0,\xi,T,\eta)\textrm{d}\xi.
\end{align*}
We notice that, if $\xi,\eta\in D$, then under the conditional law
$\mathbb{Q}^{\xi,0}\left[\left.\textrm{d}\psi\right|\psi(T)=\eta\right]$,
$\zeta_{D}(\tau_{T}\psi)>T$ is equivalent to that $\psi(T-t)\in D$
for all $t\in[0,T]$, which in turn is equivalent to that $\psi(t)\in D$.
While the last is equivalent to that $\zeta_{D}(\psi)>T$. Therefore
\begin{equation}
R_{I}^{i}(\eta,T)=\int_{D}\left(\int_{\varOmega}\tilde{Q}_{j}^{i}(\psi,T;0)1_{\left\{ \zeta_{D}(\psi)>T\right\} }\mathbb{Q}_{a;-b}^{\xi,0\rightarrow\eta,T}(\textrm{d}\psi)\right)w_{0}^{j}(\xi)h(0,\xi,T,\eta)\textrm{d}\xi.\label{R-I-01}
\end{equation}

To handle the second term which arises from the inhomogeneous boundary
data $\beta(x)$. By using the conditional law we may rewrite
\begin{align*}
R_{B}^{i}(\eta,T) & =\int_{\mathbb{R}^{d}}\int_{\varOmega}Q_{j}^{i}(\psi,\zeta_{D}(\psi))\beta^{j}(\psi(\zeta_{D}(\psi)))1_{\left\{ \zeta_{D}(\psi)\leq T\right\} }\mathbb{Q}_{a_{T};b_{T}}^{\eta,0\rightarrow\xi,T}(\textrm{d}\psi)h(0,\xi,T,\eta)\textrm{d}\xi\\
 & =\int_{\mathbb{R}^{d}}\int_{\varOmega}Q_{j}^{i}(\tau_{T}\psi,\zeta_{D}(\tau_{T}\psi))\beta^{j}(\tau_{T}\psi(\zeta_{D}(\tau_{T}\psi)))1_{\left\{ \zeta_{D}(\tau_{T}\psi)\leq T\right\} }\mathbb{Q}_{a;-b}^{\xi,0\rightarrow\eta,T}(\textrm{d}\psi)h(0,\xi,T,\eta)\textrm{d}\xi,
\end{align*}
where the complication arises due to the integral against the variable
$\xi$ takes place over the whole space $\mathbb{R}^{d}$. Observe
that $\zeta_{D}(\tau_{T}\psi)\leq T$ if and only if there is $t_{0}\in[0,T]$
such that $\psi(t_{0})\in\partial D$, which is therefore equivalent
to that $\zeta_{D}(\psi)\leq T$, and
\begin{align*}
\zeta_{D}(\tau_{T}\psi) & =\inf\left\{ t\geq0:\psi(T-t)\in\partial D\right\} \\
 & =\inf\left\{ T-s\geq0:\psi(s)\in\partial D\right\} \\
 & =T-\sup\left\{ s:0\leq s\leq T\textrm{ s.t. }\psi(s)\in\partial D\right\} .
\end{align*}
Therefore $Q_{j}^{i}(\tau_{T}\psi,\zeta_{D}(\tau_{T}\psi))=\tilde{Q}_{j}^{i}(\psi,\lambda_{T,\partial D}(\psi))$
on $\left\{ \zeta_{D}(\psi)\leq T\right\} $ and
\[
\beta^{j}(\tau_{T}\psi(\zeta_{D}(\tau_{T}\psi)))1_{\left\{ \zeta_{D}(\tau_{T}\psi)\leq T\right\} }=\beta^{j}(\psi(\lambda_{T,\partial D}(\psi)))1_{\left\{ \zeta_{D}(\psi)\leq T\right\} }.
\]
By using these relations we can rewrite the boundary term 
\begin{equation}
R_{B}^{i}(\eta,T)=\int_{\mathbb{R}^{d}}\left(\int_{\varOmega}\tilde{Q}_{j}^{i}(\psi,\lambda_{T,\partial D}(\psi))\beta^{j}(\psi(\lambda_{T,\partial D}(\psi)))1_{\left\{ \zeta_{D}(\psi)\leq T\right\} }\mathbb{Q}_{a;-b}^{\xi,0\rightarrow\eta,T}(\textrm{d}\psi)\right)h(0,\xi,T,\eta)\textrm{d}\xi.\label{R-B-t01}
\end{equation}
Finally let us consider the third term arising from the inhomogeneous
term in the parabolic system. In fact, by using duality we may rewrite
\begin{align*}
R_{N}^{i}(\eta,T) & =\int_{0}^{T}\int_{\mathbb{R}^{d}}\int_{\varOmega}Q_{j}^{i}(\psi,s)g^{j}(\psi(s),T-s)1_{\left\{ s<\zeta_{D}(\psi)\right\} }\mathbb{Q}_{a_{T};b_{T}}^{\eta,0\rightarrow\xi,T}(\textrm{d}\psi)h_{a_{T};b_{T}}(0,\eta,T,\xi)\textrm{d}\xi\textrm{d}s\\
 & =\int_{0}^{T}\int_{\mathbb{R}^{d}}\int_{\varOmega}Q_{j}^{i}(\tau_{T}\psi,s)g^{j}(\tau_{T}\psi(s),T-s)1_{\left\{ s<\zeta_{D}(\tau_{T}\psi)\right\} }\mathbb{Q}_{a;-b}^{\xi,0\rightarrow\eta,T}(\textrm{d}\psi)h(0,\xi,T,\eta)\textrm{d}\xi\textrm{d}s\\
 & =\int_{0}^{T}\int_{\mathbb{R}^{d}}\int_{\varOmega}\tilde{Q}_{j}^{i}(\psi,s)g^{j}(\psi(s),s)1_{\left\{ \zeta_{D}(\theta_{s}\psi)>T-s\right\} }\mathbb{Q}_{a;-b}^{\xi,0\rightarrow\eta,T}(\textrm{d}\psi)h(0,\xi,T,\eta)\textrm{d}\xi\textrm{d}s.
\end{align*}
Putting these equations together we deduce the functional integration
representation.
\end{proof}
In particular we have the following forward Feynman-Kac formula.
\begin{thm}
\label{thm7.2-1} Suppose $\nabla\cdot b=0$ on $\mathbb{R}^{d}$
in the distribution sense. Let $w^{i}(x,t)$ be the solution to Cauchy's
initial problem of the parabolic system:
\begin{equation}
\left(\mathscr{L}_{a;b}-\frac{\partial}{\partial t}\right)w^{j}(x,t)+\sum_{k=1}^{n}q_{k}^{j}(x,t)w^{k}(x,t)=0\quad\textrm{ in }D\times[0,T]\label{sy-01-1}
\end{equation}
subject to the initial and Dirichlet boundary conditions:
\begin{equation}
w^{j}(x,0)=w_{0}^{j}(x)\textrm{ for }x\in D,\textrm{ and }w^{j}(x,t)=0\textrm{ for }x\in\partial D\textrm{, }t>0\label{int-bd-c3-1}
\end{equation}
where $j=1,\cdots,n$. Then
\[
w^{i}(\eta,T)=\int_{D}\left(\int_{\varOmega}\tilde{Q}_{j}^{i}(\psi,T;0)1_{\left\{ \zeta_{D}(\psi)>T\right\} }\mathbb{Q}_{a;-b}^{\xi,0\rightarrow\eta,T}(\textrm{d}\psi)\right)w_{0}^{j}(\xi)h_{a;-b}(0,\xi,T,\eta)\textrm{d}\xi
\]
 for every $\eta\in D$.
\end{thm}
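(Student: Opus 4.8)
The plan is to derive Theorem~\ref{thm7.2-1} as the immediate specialisation of Theorem~\ref{thm7.2} to the case of vanishing source term and vanishing boundary data. First I would verify that the hypotheses match: in both statements $a$ is the fixed uniformly elliptic matrix field of Section~2, $b$ is bounded, Borel measurable and divergence free in the distribution sense, and the zeroth-order coefficient matrix $q=(q^i_j)$ is the same; the functional $\tilde Q(\psi,T;t)$ occurring in the conclusion is the one defined by the linear ODE \eqref{t-Q-01}, so no new object needs to be constructed. Thus it suffices to put $g^j\equiv0$ and $\beta^j\equiv0$ in Theorem~\ref{thm7.2}.

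With those choices I would read off the representation \eqref{fey-mg1}, which decomposes $w^i(\eta,T)$ into the three terms of \eqref{Rep-cas-01}: the initial-data term $R_I^i$, the boundary term $R_B^i$ whose integrand contains the factor $\beta^j(\psi(\lambda_{T,\partial D}(\psi)))$, and the inhomogeneous term $R_N^i$ whose integrand contains the factor $g^j(\psi(s),s)$. Since $\beta\equiv0$ the term $R_B^i(\eta,T)$ vanishes identically, and since $g\equiv0$ the term $R_N^i(\eta,T)$ vanishes identically. Hence $w^i(\eta,T)=R_I^i(\eta,T)$, which, once the abbreviation $h$ of Theorem~\ref{thm7.2} is spelled out as $h_{a;-b}$, is exactly
\[
\int_D\left(\int_\varOmega \tilde Q^i_j(\psi,T;0)\,1_{\{\zeta_D(\psi)>T\}}\,\mathbb{Q}_{a;-b}^{\xi,0\rightarrow\eta,T}(\mathrm{d}\psi)\right) w_0^j(\xi)\,h_{a;-b}(0,\xi,T,\eta)\,\mathrm{d}\xi .
\]

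There is essentially no obstacle here, because the substantive work --- the Dirichlet Feynman--Kac formula (Lemma~\ref{lem7.1}), the time-reversal duality for the conditional $\mathscr{L}_{a;b}$-bridges (Theorem~\ref{theorem3.3}), the gauge identity $Q(\psi,T;t)=G(\tau_T\psi,T;T-t)$, and the transition-density duality $h_{a_{T};b_{T}}(0,\eta,T,\xi)=h_{a;-b}(0,\xi,T,\eta)$ from Lemma~\ref{thm5.3-1} --- was already carried out in the proof of Theorem~\ref{thm7.2}. The one point deserving a line of care is the rewriting of the time-reversed survival event done there: under the bridge measure $\mathbb{Q}^{\xi,0}[\,\cdot\mid\psi(T)=\eta]$ with $\xi,\eta\in D$ one has $\{\zeta_D(\tau_T\psi)>T\}=\{\zeta_D(\psi)>T\}$, which is why the surviving indicator is $1_{\{\zeta_D(\psi)>T\}}$ rather than $1_{\{\zeta_D(\tau_T\psi)>T\}}$. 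If a self-contained argument were preferred, one could instead apply Lemma~\ref{lem7.1} directly to $f^i(x,t)=w^i(x,T-t)$ with $g\equiv\beta\equiv0$ and then run the duality steps in the proof of Theorem~\ref{thm7.2} verbatim; but invoking the already-established general theorem is cleaner.
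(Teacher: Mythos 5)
Your proposal is correct and is exactly how the paper obtains this result: Theorem \ref{thm7.2-1} is stated as an immediate specialisation of Theorem \ref{thm7.2} (the paper introduces it with ``In particular\ldots'' and gives no separate proof), and setting $g\equiv0$ and $\beta\equiv0$ kills the boundary and inhomogeneous terms, leaving only the initial-data term $R_I^i$ as you say.
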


\begin{rem}
We would like to emphasize the assumptions on $a^{ij}(x,t)$ and $b^{i}(x,t)$,
both are defined for all $x\in\mathbb{R}^{d}$ and $t\geq0$. In order
to ensure the previous functional integration representation to be
valid, we assume that $b^{i}(x,t)$ is bounded (this condition can
be weaken) and Borel measurable, but the most crucial assumption is
that $b(x,t)$ is divergence free on $\mathbb{R}^{d}$ (not only on
$D$ !) in the distribution sense. The probability measure used in
the representation is the distribution associated with the differential
operator
\[
\mathscr{L}_{a;-b}=\nu\sum_{i,j=1}^{d}\frac{\partial}{\partial x^{j}}a^{ij}(x,t)\frac{\partial}{\partial x^{i}}-\sum_{i=1}^{d}b^{i}(x,t)\frac{\partial}{\partial x^{i}}
\]
which is the formal adjoint operator of $\mathscr{L}_{a;b}$.
\end{rem}

\section{Navier-Stokes equations}

In this section we apply the forward Feynman-Kac formula to derive
a stochastic representation for solutions of the Navier-Stokes equations
in domains via their vortex dynamics, cf. \citep{Cottet and Koumoutsakos 2000,Majda and Bertozzi 2002}.
We begin with the following elementary fact.
\begin{lem}
\label{lem9.1}Suppose $b(x)$ is a $C^{1}$-vector field in $\overline{D}$
, and $\nabla\cdot b(x)=0$ for all $x\in D$. Suppose $b(x)=0$ for
$x\in\partial D$. Extend $b(x)$ to all $x\in\mathbb{R}^{d}$ by
setting $b(x)=0$ for $x\notin\overline{D}$. Then $b(x)$ is divergence-free
in distribution sense on $\mathbb{R}^{d}$.
\end{lem}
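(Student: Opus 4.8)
The plan is to verify the definition of distributional divergence directly, i.e.\ to show that $\int_{\mathbb{R}^d} b(x)\cdot\nabla\varphi(x)\,\mathrm{d}x = 0$ for every test function $\varphi\in C_c^\infty(\mathbb{R}^d)$. Since the extended field satisfies $b(x)=0$ for $x\notin\overline{D}$ and $\partial D$ is a Lebesgue-null set, the integral over $\mathbb{R}^d$ collapses to $\int_D b\cdot\nabla\varphi\,\mathrm{d}x$; this is a genuine finite integral because $b$ is continuous on $\overline{D}$, hence bounded on the compact set $\overline{D}\cap\operatorname{supp}\varphi$.

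Next I would apply the divergence theorem on the region $D_R := D\cap B_R$, choosing $R$ large enough that $\operatorname{supp}\varphi\subset B_R$ (this reduction to a bounded region is what handles the case where $D$ is unbounded). Since $b\in C^1(\overline D)$ and $\partial D$ is smooth, $\varphi b$ is $C^1$ up to $\partial D_R$ and
\[
\int_{D_R} b\cdot\nabla\varphi\,\mathrm{d}x \;=\; -\int_{D_R}(\nabla\cdot b)\,\varphi\,\mathrm{d}x \;+\; \int_{\partial D_R}(b\cdot n)\,\varphi\,\mathrm{d}S,
\]
where $n$ is the outward unit normal. The boundary $\partial D_R$ splits into a piece contained in $\partial D$ and a piece contained in $\partial B_R$; on the latter $\varphi\equiv 0$, so only the part lying in $\partial D$ survives, and there $b=0$ by hypothesis, so the surface integral vanishes. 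The volume term vanishes because $\nabla\cdot b=0$ throughout $D$. Hence $\int_D b\cdot\nabla\varphi\,\mathrm{d}x=0$, which is exactly the assertion.

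I do not expect a genuine obstacle here; the statement is essentially an integration-by-parts identity. The only point that requires a little care is the bookkeeping at the boundary: the jump of the extended $b$ across $\partial D$ would in general contribute a surface layer $(b\cdot n)\,\mathrm{d}S$ supported on $\partial D$ to the distributional divergence, and it is precisely the no-penetration condition $b\cdot n=0$ on $\partial D$ (implied by, but weaker than, the full no-slip condition $b|_{\partial D}=0$) that annihilates it. Everything else --- the null measure of $\partial D$, the reduction to a bounded region via $\operatorname{supp}\varphi$, and the validity of the divergence theorem --- is standard given the stated smoothness of $\partial D$ and the $C^1(\overline D)$ regularity of $b$.
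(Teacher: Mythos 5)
Your proof is correct and follows essentially the same route as the paper's: both reduce the claim to an integration-by-parts (divergence theorem) identity in which the interior term vanishes because $\nabla\cdot b=0$ in $D$ and the surface term over $\partial D$ vanishes because $b|_{\partial D}=0$. Your version is marginally more careful in truncating to $D\cap B_R$ to cover unbounded $D$, and your side remark that only $b\cdot n=0$ on $\partial D$ is actually needed is a correct sharpening, but the argument is the same.
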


\begin{proof}
According to assumptions, $\partial D$ is a smooth manifold of $d-1$
dimensions, so has zero Lebesgue measure. Therefore $\nabla\cdot b=0$
a.e. on $\mathbb{R}^{d}$. Suppose $\varphi$ is a smooth function
on $\mathbb{R}^{d}$ with a compact support. Then
\[
\nabla\cdot(\varphi b)=\nabla\varphi\cdot b+\varphi\nabla\cdot b=\nabla\varphi\cdot b\quad\textrm{ a.e. }
\]
Hence
\begin{align*}
\int_{\mathbb{R}^{d}}\nabla\varphi\cdot b & =\int_{\mathbb{R}^{d}}\nabla\cdot(\varphi b)=\int_{\partial D}\varphi b\cdot\boldsymbol{\nu}+\int_{\partial D^{c}}\varphi b\cdot\boldsymbol{\nu}\\
 & =0
\end{align*}
where the last equality follows from the assumption that $b(x)=0$
for $x\in\partial D$. Therefore $\nabla\cdot b=0$ on $\mathbb{R}^{d}$
in the distribution sense. 
\end{proof}
Recall that the velocity $u(x,t)$ and the pressure $P(x,t)$ of an
incompressible fluid flow constrained in $D$ are solutions of the
Navier-Stokes equations
\begin{equation}
\frac{\partial}{\partial t}u+(u\cdot\nabla)u-\nu\Delta u-\nabla P=0\quad\textrm{ in }D\times[0,\infty)\label{NS-a1}
\end{equation}
and
\begin{equation}
\nabla\cdot u=0\quad\textrm{ in }D\times[0,\infty),\label{NS-2}
\end{equation}
subject to the no-slip condition
\begin{equation}
u(x,t)=0\quad\textrm{ for }x\in\partial D\textrm{ and }t\geq0.\label{NS-3}
\end{equation}
Suppose the initial data $u_{0}\in C^{\infty}(\overline{D})$. Then,
in dimension two, $u(x,t)$ remains smooth in $(x,t)$ up to the boundary,
while in dimension three, the regularity of $u(x,t)$ remains open.
Let $\omega=\nabla\wedge u$ be the vorticity and $\omega_{0}=\nabla\wedge u_{0}$.
It can be verified by a simple calculation that the boundary vorticity
can be related to the shearing stress $\tau_{ij}=\nu\left(\frac{\partial u^{i}}{\partial x^{j}}+\frac{\partial u^{j}}{\partial x^{i}}\right)$
applied immediately to the boundary surface. In fact it can be demonstrated
that the normal part $\omega^{\perp}$ of the boundary vorticity $\left.\omega\right|_{\partial D}$,
vanishes identically along the boundary surface, while its tangential
vorticity $\omega^{\parallel}$ along the boundary coincides with
the normal shearing stress $\tau^{\perp}$ up to a numerical factor
$\nu^{-1}$. The calculation of the boundary vorticity is an important
problem which will not be discussed in this paper in detail. However
let us point out that the normal stress applied immediately to the
wall is a fluid dynamical quantity to be measured or to be controlled,
and can be calculated approximately by using boundary layer equation,
cf. \citep[Chapter 6]{Schlichting9th-2017}.

\subsection{Two dimensional flows}

Let us first establish a representation for 2D flows. In dimension
two, the vorticity $\omega$ can be identified with the scalar function
$\frac{\partial}{\partial x^{1}}u^{2}-\frac{\partial}{\partial x^{2}}u^{1}$
and $\omega$ is a solution to the vorticity transport equation
\begin{equation}
\frac{\partial}{\partial t}\omega+(u\cdot\nabla)\omega-\nu\Delta\omega=0\quad\textrm{ in }D\times[0,\infty),\label{vor-01}
\end{equation}
where the boundary value of $\omega$ along the wall $\partial D$
may be identified with the stress of the fluid flow immediately injected
to the wall, cf. \citep{Schlichting9th-2017,Liu and Weinan2000,Weinan and Liu 1997}.
Let us denote the stress along the wall by $\sigma$, whose explicit
expression need to be calculated in terms of the geometry of $\partial D$
as well, so they are must be treated case by case. 

Since $\nabla\cdot u=0$, the velocity field may be recovered in terms
of $\omega$ by solving the Poisson equations
\begin{equation}
\Delta u^{1}=-\frac{\partial\omega}{\partial x^{2}},\quad\Delta u^{2}=\frac{\partial\omega}{\partial x^{1}}\label{lap-02}
\end{equation}
subject to the Dirichlet boundary condition $u^{1}(x,t)=u^{2}(x,t)=0$
for $x\in\partial D$. Hence according to Green formula 
\begin{equation}
u^{i}(x,t)=\int_{D}K^{i}(x,\eta)\omega(\eta,t)\textrm{d}\eta=\int_{\mathbb{R}^{2}}K_{D}^{i}(x,\eta)\omega(\eta,t)\textrm{d}\eta\label{lap-03}
\end{equation}
where 
\begin{equation}
K_{D}^{i}(x,\eta)=1_{D}(\eta)K^{i}(x,\eta)\label{Ki-ee1}
\end{equation}
and the integral kernel $K^{i}$ depend on the region $D$ only. 
\begin{thm}
\label{thm9.2} Let $u$ be extended to be a vector field on $\mathbb{R}^{2}\times[0,\infty)$
such that $u(\cdot,t)$ is divergence free in the distribution sense
on the whole plane $\mathbb{R}^{2}$. Let $X(\xi,t)$ (for $\xi\in\mathbb{R}^{2}$
and $t\geq0$) be the solution to the stochastic differential equation:
\begin{equation}
\textrm{d}X(t)=u(X(t),t)\textrm{d}t+\sqrt{2\nu}\textrm{d}B(t),\quad X(0)=\xi.\label{s-re01}
\end{equation}
Then
\begin{align}
u(x,t) & =\int_{D}\omega_{0}(\xi)\mathbb{E}\left[K_{D}(x,X(\xi,t))J_{1}(\xi,X(\xi,t),t)\right]\textrm{d}\xi\nonumber \\
 & +\int_{\mathbb{R}^{2}}\mathbb{E}\left[K_{D}^{i}(x,X(\xi,t))J_{2}(\xi,X(\xi,t),t)\right]\textrm{d}\xi\label{2D-rep-01}
\end{align}
for every $x\in D$ and $t>0$, where 
\begin{equation}
J_{1}(\xi,\eta,t)=\mathbb{P}\left[\left.\zeta_{D}(X(\xi,\cdot))>t\right|X(\xi,t)=\eta\right]\label{J1-e2}
\end{equation}
and
\begin{equation}
J_{2}(\xi,\eta,t)=\mathbb{E}\left[\left.\sigma\left(X(\xi,\lambda_{t,\partial D}(X(\xi,\cdot))\right)1_{\left\{ \zeta_{D}(X(\xi,\cdot))\leq t\right\} }\right|X(\xi,t)=\eta\right]\label{J2-e2}
\end{equation}
for any $\xi,\eta\in\mathbb{R}^{2}$ and $t>0$.
\end{thm}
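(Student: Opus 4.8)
The plan is to combine the Biot--Savart representation \eqref{lap-03} with the forward Feynman--Kac formula of Theorem \ref{thm7.2} applied to the two-dimensional vorticity transport equation \eqref{vor-01}. First I would observe that in dimension two the vorticity stretching term is absent, so $\omega$ solves a \emph{scalar} parabolic equation $\left(\mathscr{L}_{u}-\frac{\partial}{\partial t}\right)\omega=0$ in $D\times[0,\infty)$ with initial datum $\omega_0$ and inhomogeneous Dirichlet boundary datum $\omega(x,t)=\sigma(x)$ on $\partial D$. Here the operator is $\mathscr{L}_{u}=\nu\Delta+u\cdot\nabla$ with $a^{ij}=\delta^{ij}$, $b=u$, and we take $n=1$, $q\equiv0$, $g\equiv0$. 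The hypothesis that $u(\cdot,t)$ is extended to be divergence free in the distribution sense on all of $\mathbb{R}^2$ — guaranteed under the no-slip condition by Lemma \ref{lem9.1} — is exactly what is needed to invoke Theorem \ref{thm7.2}.

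Next I would apply Theorem \ref{thm7.2} with these data. Because $q\equiv0$, the gauge functional satisfies $\tilde Q^i_j(\psi,T;t)\equiv\delta^i_j$, so all the $\tilde Q$ factors in \eqref{fey-mg1} disappear; and because $g\equiv0$ the last (inhomogeneous) term vanishes. What remains is
\begin{align*}
\omega(\eta,T) &=\int_D\left(\int_\varOmega 1_{\{\zeta_D(\psi)>T\}}\,\mathbb{Q}_{u}^{\xi,0\to\eta,T}(\textrm{d}\psi)\right)\omega_0(\xi)\,h(0,\xi,T,\eta)\,\textrm{d}\xi\\
&\quad+\int_{\mathbb{R}^2}\left(\int_\varOmega\sigma(\psi(\lambda_{T,\partial D}(\psi)))1_{\{\zeta_D(\psi)\le T\}}\,\mathbb{Q}_{u}^{\xi,0\to\eta,T}(\textrm{d}\psi)\right)h(0,\xi,T,\eta)\,\textrm{d}\xi,
\end{align*}
where $h(0,\xi,T,\eta)=h_{1;-u}(0,\xi,T,\eta)$ is the transition density of the $\mathscr{L}_{-u}$-diffusion (the sign flips because $\mathscr{L}_{u}^\star=\mathscr{L}_{-u}$ for divergence-free $u$, exactly as in the Remark following Theorem \ref{thm7.2-1}). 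The point of Theorem \ref{theorem3.3} and Lemma \ref{thm5.3-1}, already built into Theorem \ref{thm7.2}, is that this conditional expectation over the time-reversed bridge $\mathbb{Q}_{u}^{\xi,0\to\eta,T}$ can be re-read as a conditional expectation over the \emph{forward} Taylor diffusion $X(\xi,\cdot)$ of \eqref{s-re01} conditioned on its terminal position $X(\xi,T)=\eta$. Concretely, $h(0,\xi,T,\eta)\,\textrm{d}\xi$ is the law of the initial point of the forward bridge ending at $\eta$, and integrating a bridge functional against it reproduces $\mathbb{E}[\,\cdot\,1_{\{X(\xi,T)\in\textrm{d}\eta\}}]$ summed over $\xi$; this identifies the two bracketed conditional integrals with $J_1(\xi,\eta,T)$ and $J_2(\xi,\eta,T)$ respectively, once one notes that under time reversal the first exit time $\zeta_D$ of the bridge corresponds to the exit time of $X(\xi,\cdot)$ and the last-boundary-visit time $\lambda_{T,\partial D}$ is the reversed analogue of the first exit — precisely the equivalences established inside the proof of Theorem \ref{thm7.2} (the passage leading to \eqref{R-I-01} and \eqref{R-B-t01}).

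Finally I would substitute this representation of $\omega(\eta,t)$ into the Biot--Savart formula \eqref{lap-03}, $u^i(x,t)=\int_{\mathbb{R}^2}K_D^i(x,\eta)\,\omega(\eta,t)\,\textrm{d}\eta$, and interchange the $\eta$-integration with the expectation over $X(\xi,\cdot)$. Writing $\int_{\mathbb{R}^2}K_D^i(x,\eta)\,\mathbb{E}[\,\cdots\,|\,X(\xi,t)=\eta\,]\,h(0,\xi,t,\eta)\,\textrm{d}\eta=\mathbb{E}[K_D^i(x,X(\xi,t))\,J_k(\xi,X(\xi,t),t)]$ collapses the conditional density back into a plain expectation over the unconditioned diffusion, yielding exactly \eqref{2D-rep-01}, with the first term carrying $\omega_0$ and $J_1$ over $\xi\in D$ (since $\omega_0$ is supported in $\overline D$) and the second carrying $J_2$ over $\xi\in\mathbb{R}^2$ (reflecting that the boundary term in Theorem \ref{thm7.2} integrates over all of $\mathbb{R}^d$). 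The main obstacle I anticipate is not any single estimate but the careful bookkeeping of the time-reversal correspondences — making sure that $\zeta_D$, $\lambda_{T,\partial D}$, the conditioning event $\{X(\xi,t)=\eta\}$, and the direction of the drift ($-u$ for the bridge density $h$ versus $+u$ for the forward SDE \eqref{s-re01}) all line up consistently — together with justifying the Fubini interchange, for which the boundedness of $K_D^i$ away from the diagonal singularity and of $\sigma$, plus integrability of the Biot--Savart kernel, suffice.
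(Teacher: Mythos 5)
Your overall strategy --- rewrite the 2D vorticity transport equation as a scalar forward parabolic equation with zero potential, apply the forward Feynman--Kac formula of Theorem \ref{thm7.2} (with $q\equiv 0$, $g\equiv 0$, boundary datum $\beta=\sigma$), and then push the resulting representation of $\omega$ through the Biot--Savart formula \eqref{lap-03} by Fubini --- is exactly the paper's proof. However, there is a sign error at the one step where the sign actually matters, and as written it makes the argument internally inconsistent. Equation \eqref{vor-01} reads $\partial_t\omega+(u\cdot\nabla)\omega-\nu\Delta\omega=0$, i.e. $\bigl(\mathscr{L}_{-u}-\frac{\partial}{\partial t}\bigr)\omega=0$: the drift entering the parabolic operator is $b=-u$, not $b=u$ as you state. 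Theorem \ref{thm7.2} applied with drift $b$ produces bridges $\mathbb{Q}_{a;-b}^{\xi,0\to\eta,T}$ and the density $h_{a;-b}$; with the correct choice $b=-u$ these become $\mathbb{Q}_{a;u}^{\xi,0\to\eta,T}$ and $h_{a;u}$, i.e. precisely the bridge and the transition density of the Taylor diffusion \eqref{s-re01} with drift $+u$. With your choice $b=u$ the theorem would instead deliver the $\mathscr{L}_{-u}$-diffusion, and your displayed formula mixes the two: you write the bridge as $\mathbb{Q}_{u}^{\xi,0\to\eta,T}$ (correct for the final answer) but declare $h=h_{1;-u}$ to be the transition density of the $\mathscr{L}_{-u}$-diffusion (wrong). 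This is not cosmetic: your final collapse step, $\int_{\mathbb{R}^2}\mathbb{E}[\,\cdot\mid X(\xi,t)=\eta]\,h(0,\xi,t,\eta)\,\mathrm{d}\eta=\mathbb{E}[\,\cdot\,]$, is valid only if $h(0,\xi,t,\cdot)$ is the law of $X(\xi,t)$, i.e. only if $h=h_{a;u}$. With $h=h_{a;-u}$ that identity fails and \eqref{2D-rep-01} does not follow. The fix is simply to take $b=-u$ throughout; then every object (bridge, density, exit time $\zeta_D$, last-boundary-visit time $\lambda_{t,\partial D}$) refers to the forward Taylor diffusion, and the remainder of your argument coincides with the paper's.

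A small point in your favour: the paper's own proof cites Theorem \ref{thm7.2-1}, which only covers homogeneous boundary data, whereas the boundary term $J_2$ with datum $\sigma$ requires the full Theorem \ref{thm7.2}; you invoke the correct theorem.
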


\begin{proof}
The vorticity transport equation may be formulated in terms of $\mathscr{L}_{-u}$,
that is,
\begin{equation}
\left(\frac{\partial}{\partial t}-\mathscr{L}_{-u}\right)\omega=0\quad\textrm{ in }D\times[0,\infty).\label{vor-02}
\end{equation}
 Since $u(\cdot,t)$ is divergence free in the distribution sense
for every $t$ on $\mathbb{R}^{2}$, therefore we may apply Theorem
\ref{thm7.2-1} to $\omega$, to obtain
\begin{align}
\omega(\eta,t) & =\int_{D}\omega_{0}(\xi)\mathbb{P}^{\xi,0\rightarrow\eta,t}\left[\zeta_{D}(\psi)>t\right]h(0,\xi,t,\eta)\textrm{d}\xi\nonumber \\
 & +\int_{\mathbb{R}^{2}}\left(\int_{\varOmega}\sigma(\psi(\lambda_{t,\partial D}(\psi)))1_{\left\{ \zeta_{D}(\psi)\leq t\right\} }\mathbb{P}^{\xi,0\rightarrow\eta,t}(\textrm{d}\psi)\right)h(0,\xi,t,\eta)\textrm{d}\xi\label{vort-03}
\end{align}
for $\eta\in D$ and $t>0$, where $\mathbb{P}^{\xi,0\rightarrow\eta,T}$
denotes the conditional distribution of the $\mathscr{L}_{u}$-diffusion
started from $\xi$ at time zero given $\psi(t)=\eta$, and $h(\tau,\xi,t,\eta)$
is the transition probability density function of the $\mathscr{L}_{u}$-diffusion.
Since the distribution of $X(\xi,t)$ is exactly $\mathbb{P}^{\xi,0}$,
the conclusion therefore follows immediately.
\end{proof}

\subsection{Three dimensional flows}

Next we consider an incompressible fluid flow with its velocity $u=(u^{1},u^{2},u^{3})$,
constrained in a region $D\subset\mathbb{R}^{3}$. Therefore $u(x,t)$
is a solution to the 3D Navier-Stokes equations (\ref{NS-a1}, \ref{NS-2},
\ref{NS-3}). The vorticity $\omega^{i}=\varepsilon^{ijk}\frac{\partial}{\partial x^{j}}u^{k}$
are solutions to the 3D vorticity transport equations
\begin{equation}
\frac{\partial}{\partial t}\omega^{i}+(u\cdot\nabla)\omega^{i}-\nu\Delta\omega^{i}-\frac{\partial u^{i}}{\partial x^{j}}\omega^{j}=0\quad\textrm{ in }D\times[0,\infty),\label{vort05}
\end{equation}
where $\nu>0$ is the viscosity constant as usual, which can be written
as
\begin{equation}
\left(\mathscr{L}_{-u}-\frac{\partial}{\partial t}\right)\omega^{i}+S_{j}^{i}\omega^{j}=0,\label{vort04}
\end{equation}
where 
\[
S_{j}^{i}=\frac{1}{2}\left(\frac{\partial u^{i}}{\partial x^{j}}+\frac{\partial u^{j}}{\partial x^{i}}\right)
\]
is the symmetric stress tensor. The boundary value of $\omega$ along
the wall $\partial D$ can be again identified with the stress tensor,
denoted by $\beta$. Since $\nabla\cdot u=0$ in $D$ so that $\Delta u=-\nabla\times\omega$
in $D$ and $u$ satisfies the Dirichlet boundary condition along
$\partial D$. Therefore
\begin{equation}
u(x,t)=-\int_{D}H(x,\eta)\nabla\times\omega(\eta,t)\textrm{d}\eta,\label{3d-BS-01}
\end{equation}
where $H(x,y)$ is the Green function of $D$. By integration by parts
we obtain
\begin{align}
u(x,t) & =\int_{D}K(x,\eta)\times\omega(\eta,t)\textrm{d}\eta\nonumber \\
 & =\int_{\mathbb{R}^{3}}K_{D}(x,\eta)\times\omega(\eta,t)\textrm{d}\eta\label{u-a01}
\end{align}
where $K_{D}(x,\eta)=1_{D}(\eta)K(x,\eta)$. That is
\begin{equation}
u^{i}(x,t)=\int_{D}\varepsilon^{ilk}K^{l}(x,\eta)\omega^{k}(\eta,t)\textrm{d}\eta.\label{u-a02}
\end{equation}
It follows that 
\begin{align}
S_{j}^{i}(x,t) & =\int_{D}K_{j}^{i,k}(x,\eta)\omega^{k}(\eta,t)\textrm{d}\eta\nonumber \\
 & =\int_{D}K_{D;j}^{i,k}(x,\eta)\omega^{k}(\eta,t)\textrm{d}\eta,\label{q-a01}
\end{align}
where
\begin{equation}
K_{j}^{i,k}(x,\eta)=\varepsilon^{ilk}\frac{\partial}{\partial x^{j}}K^{l}(x,\eta)+\varepsilon^{jlk}\frac{\partial}{\partial x^{i}}K^{l}(x,\eta)\label{kernel-K01}
\end{equation}
and
\begin{equation}
K_{D;j}^{i,k}(x,\eta)=1_{D}(\eta)\left(\varepsilon^{ilk}\frac{\partial}{\partial x^{j}}K^{l}(x,\eta)+\varepsilon^{jlk}\frac{\partial}{\partial x^{i}}K^{l}(x,\eta)\right).\label{kernel-K02}
\end{equation}
We notice that the Green function $G$, the integral kernels $K$
and $K_{D}$ are determined solely by the domain $D\subset\mathbb{R}^{3}$. 

Let $u(x,t)$ be extended to be a divergence free (in the distribution
sense) vector field on $\mathbb{R}^{3}$, and $X(\xi;t)$ and $\tilde{Q}_{j}^{i}(x,t)$
are the solutions to the stochastic differential equations 
\begin{equation}
\textrm{d}X(\xi;t)=u(X(\xi;t),t)\textrm{d}t+\sqrt{2\nu}\textrm{d}B(t),\quad X(\xi;0)=\xi\label{sde-3d-01}
\end{equation}
and 
\begin{equation}
\frac{\textrm{d}}{\textrm{d}s}\tilde{Q}_{j}^{i}(\xi,t;s)=-\tilde{Q}_{k}^{i}(\xi,t;s)1_{D}(X(\xi;s))S_{j}^{k}(X(\xi;s),s),\quad\tilde{Q}_{j}^{i}(\xi,t;t)=\delta_{j}^{i}\label{sde-3d-02}
\end{equation}
for $t>0$ and $s\geq0$. 
\begin{thm}
\label{thm9.3}Suppose $u(x,t)$ is smooth and bounded on $\overline{D}\times[0,T]$,
then 
\begin{align}
u^{k}(x,t) & =\int_{D}\omega_{0}^{l}(\xi)\mathbb{E}\left[\varepsilon^{ijk}K_{D}^{j}(x,X(\xi,t))J_{l}^{i}(\xi,X(\xi,t),t)\right]\textrm{d}\xi\nonumber \\
 & +\int_{\mathbb{R}^{3}}\mathbb{E}\left[\varepsilon^{ijk}K_{D}^{j}(x,X(\xi,t))B^{i}(\xi,X(\xi,t),t)\right]\textrm{d}\xi\label{rep-3d-1}
\end{align}
and
\begin{align}
S_{j}^{i}(x,t) & =\int_{D}\omega_{0}^{l}(\xi)\mathbb{E}\left[K_{D;j}^{i,k}(x,\eta)J_{l}^{k}(\xi,X(\xi,t),t)\right]\textrm{d}\xi\nonumber \\
 & +\int_{\mathbb{R}^{3}}\mathbb{E}\left[K_{D;j}^{i,k}(x,\eta)B^{k}(\xi,X(\xi,t),t)\right]\textrm{d}\xi\label{rep-002}
\end{align}
for all $x\in D$ and $t\in(0,T]$, where $k=1,2,3$, 
\begin{equation}
J_{j}^{i}(\xi,\eta,t)=\mathbb{E}\left[\left.\tilde{Q}_{j}^{i}(\xi,t;0)1_{\left\{ \zeta_{D}(X(\xi,\cdot))>t\right\} }\right|X(\xi,t)=\eta\right]\label{3dij}
\end{equation}
and
\begin{equation}
B^{i}(\xi,\eta,t)=\mathbb{E}\left[\left.\tilde{Q}_{j}^{i}(\xi,t;\lambda_{T,\partial D}(X(\xi,\cdot)))1_{\left\{ \zeta_{D}(X(\xi,\cdot))\leq t\right\} }\beta^{j}(X(\xi,\lambda_{T,\partial D}(X(\xi,\cdot)))\right|X(\xi,t)=\eta\right].\label{3d-b}
\end{equation}
\end{thm}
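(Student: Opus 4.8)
The plan is to read the vorticity transport system \eqref{vort04} as a particular case of the boundary value problem \eqref{sy-01} and to apply the forward Feynman--Kac formula of Theorem \ref{thm7.2}, then to insert the resulting representation of $\omega$ into the Biot--Savart law \eqref{u-a02} and into its differentiated form \eqref{q-a01}. Concretely, \eqref{vort04} is \eqref{sy-01} with $a^{ij}=\delta^{ij}$, $b=-u$, $q^{i}_{j}=S^{i}_{j}$, $g^{i}=0$ and $n=3$, the initial datum being $\omega_{0}=\nabla\wedge u_{0}$ and the Dirichlet datum along the wall being the boundary vorticity, identified with the stress tensor $\beta$. The hypothesis of Theorem \ref{thm7.2} that $b$ be divergence free on all of $\mathbb{R}^{3}$ (not merely on $D$) in the distribution sense is exactly what Lemma \ref{lem9.1} provides, using the no-slip condition $u|_{\partial D}=0$ together with the extension of $u$ by zero; boundedness and Borel measurability of $b=-u$, as well as the $C^{2,1}$ regularity of $\omega$ needed for the classical Feynman--Kac identity of Lemma \ref{lem7.1} underlying Theorem \ref{thm7.2}, follow from the standing assumption that $u$ is smooth and bounded on $\overline{D}\times[0,T]$.

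Next I would apply Theorem \ref{thm7.2}. For $a^{ij}=\delta^{ij}$ the operator $\mathscr{L}_{a;-b}$ is $\mathscr{L}_{u}=\nu\Delta+u\cdot\nabla=L_{u}$, so the diffusion $\mathbb{Q}^{\xi,0}_{a;-b}$ appearing there is precisely the law of Taylor's diffusion $X(\xi,\cdot)$ defined by \eqref{sde-3d-01}, $h(0,\xi,t,\eta)$ is its transition density, and along the path $\psi=X(\xi,\cdot)$ the gauge flow $\tilde{Q}(\psi,T;\cdot)$ driven by $q=S$ is exactly the solution of \eqref{sde-3d-02}. Rewriting each inner integral against the bridge measure $\mathbb{Q}^{\xi,0\to\eta,t}_{a;-b}$ as a conditional expectation of the corresponding functional of $X(\xi,\cdot)$ given $X(\xi,t)=\eta$ turns the two surviving terms of Theorem \ref{thm7.2} (the $g$-term vanishes) into
\[
\omega^{i}(\eta,t)=\int_{D}\omega_{0}^{j}(\xi)\,J_{j}^{i}(\xi,\eta,t)\,h(0,\xi,t,\eta)\,\textrm{d}\xi+\int_{\mathbb{R}^{3}}B^{i}(\xi,\eta,t)\,h(0,\xi,t,\eta)\,\textrm{d}\xi
\]
for $\eta\in D$, with $J^{i}_{j}$ and $B^{i}$ as in \eqref{3dij} and \eqref{3d-b}. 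This is the three-dimensional analogue of the representation \eqref{vort-03} obtained in the proof of Theorem \ref{thm9.2}, now carrying the nontrivial gauge coming from the vortex-stretching term $S^{i}_{j}\omega^{j}$.

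Finally I would substitute this expression for $\omega^{k}(\eta,t)$ into \eqref{u-a02} for $u$ and into \eqref{q-a01} for $S^{i}_{j}$, interchange the $\eta$- and $\xi$-integrations by Fubini, and collapse the remaining $\textrm{d}\eta$-integral against $h(0,\xi,t,\eta)$ into an expectation over the paths of $X(\xi,\cdot)$: since $h(0,\xi,t,\cdot)$ is the density of $X(\xi,t)$ while $J^{i}_{j}(\xi,\cdot,t)$ and $B^{i}(\xi,\cdot,t)$ are, by construction, conditional expectations given $X(\xi,t)$, the tower property gives for instance
\[
\int_{\mathbb{R}^{3}}\varepsilon^{ijk}K_{D}^{j}(x,\eta)\,J_{l}^{i}(\xi,\eta,t)\,h(0,\xi,t,\eta)\,\textrm{d}\eta=\mathbb{E}\left[\varepsilon^{ijk}K_{D}^{j}(x,X(\xi,t))\,J_{l}^{i}(\xi,X(\xi,t),t)\right],
\]
and likewise for the $B$-terms and for the kernel $K^{i,k}_{D;j}$ of \eqref{kernel-K02}; assembling the pieces yields exactly \eqref{rep-3d-1} and \eqref{rep-002}.

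The main obstacle is the analytic justification of the Fubini exchange and the finiteness of the expectations, because the Biot--Savart kernels are singular on the diagonal --- mildly for $K^{j}_{D}$ (of order $|x-\eta|^{-2}$ in $\mathbb{R}^{3}$, hence locally integrable), but more seriously for the derivative kernels $K^{i,k}_{D;j}$ in \eqref{kernel-K02}, which are of order $|x-\eta|^{-3}$ and must be interpreted in a principal-value (singular-integral) sense, so that moving them inside the expectation in \eqref{rep-002} requires care. Controlling $\mathbb{E}\big[\,|K^{j}_{D}(x,X(\xi,t))|\,\big]$ and its $\xi$-integral uniformly on $\overline{D}\times(0,T]$ uses the Aronson-type Gaussian upper bound for $h=h_{u}$, the boundedness of the gauge $\tilde{Q}$ (a linear ODE with coefficient $1_{D}S$, bounded thanks to the smoothness of $u$), and the boundedness of $\beta$. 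A secondary, more structural point is the legitimacy of treating the boundary trace $\omega|_{\partial D}=\beta$ as prescribed Dirichlet data: this rests on the identification of the boundary vorticity with the wall shear stress, which depends on the geometry of $\partial D$ and, as indicated above, is to be carried out case by case rather than within this theorem.
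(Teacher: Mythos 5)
Your proposal follows essentially the same route as the paper: apply the forward Feynman--Kac formula for the Dirichlet boundary value problem (Theorem \ref{thm7.2}; the paper's own proof cites Theorem \ref{thm7.3}, apparently a slip, since the boundary terms $\zeta_{D}$, $\lambda_{T,\partial D}$ and $\beta$ only appear in the boundary-value version) to the vorticity transport system with $b=-u$ and gauge $q=S$, then substitute the resulting representation of $\omega$ into the Biot--Savart formulas \eqref{u-a02} and \eqref{q-a01} and conclude by Fubini and the tower property. Your write-up merely supplies details the paper leaves implicit --- the role of Lemma \ref{lem9.1}, the identification of the bridge expectations with $J^{i}_{j}$ and $B^{i}$, and the integrability of the singular kernels --- but the argument is the same.
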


\begin{proof}
The proof is similar to that of two dimensional case. According to
Theorem \ref{thm7.3}
\begin{align*}
\omega^{i}(\eta,T) & =\int_{D}J_{l}^{i}(\xi,\eta,t)\omega_{0}^{l}(\xi)h_{u}(0,\xi,T,\eta)\textrm{d}\xi\\
 & +\int_{\mathbb{R}^{d}}B^{i}(\xi,\eta,t)h_{u}(0,\xi,T,\eta)\textrm{d}\xi
\end{align*}
and the representation formula follows from (\ref{3d-BS-01}) and
the Fubini theorem immediately.
\end{proof}

\end{document}